\tikzset{>=latex}
\newtheorem{theorem}{Theorem}[section]
\newtheorem{prop}[theorem]{Proposition}
\newtheorem{cor}[theorem]{Corollary}
\theoremstyle{definition}
\newtheorem{defn}[theorem]{Definition}
\newtheorem{remark}[theorem]{Remark}
\newtheorem{notation}[theorem]{Notation}
\numberwithin{equation}{section}
\def\beq{\begin{equation}}
\def\eeq{\end{equation}}
\def\longra{\longrightarrow}
\newcommand{\hr}[1]{\left(#1\right)}                                                    
\newcommand{\hm}[1]{\left|#1\right|}                                                    
\newcommand{\ha}[1]{\left\langle#1\right\rangle}                                        
\newcommand{\hs}[1]{\left[#1\right]}                                                    
\newcommand{\hc}[1]{\left\{#1\right\}}                                                  
\def\le{\leqslant}
\def\ge{\geqslant}
\def\bs{\boldsymbol}
\def\C{\mathbb C}
\def\eps{\varepsilon}
\def\Fc{\mathcal F}
\def\Frac{\operatorname{Frac}}
\def\gl{\mathfrak{gl}}
\def\Gc{\mathcal G}
\def\heis{\mathfrak{heis}}
\def\Hbb{\mathbb H}
\def\Hc{\mathcal H}
\def\Hilb{\mathbf{Hilb}}
\def\inv{\mathrm{inv}}
\def\la{\lambda}
\def\La{\Lambda}
\def\Lbb{\mathbb L}
\def\Mc{\mathcal{M}}
\def\Mtau{M^{\tau}}
\def\Mu{\mathrm{M}}
\def\R{\mathbb R}
\def\Re{\operatorname{Re}}
\def\Res{\mathrm{Res}}
\def\Rho{\mathrm P}
\def\Tc{\mathcal T}
\def\sh{\operatorname{sh}}
\def\SH{\mathbb{SH}}
\def\wt{\operatorname{wt}}
\def\Wc{\mathcal W}
\def\Xc{\mathcal X}
\def\Z{\mathbb Z}
\begin{document}

\title[Ruijsenaars wavefunctions as  modular group matrix coefficients]{Ruijsenaars wavefunctions as modular group matrix coefficients}
\author{Philippe Di Francesco}
\author{Rinat Kedem}
\author{Sergey Khoroshkin}
\author{Gus Schrader}
\author{Alexander Shapiro}

\begin{abstract}
We give a description of the Halln\"as--Ruijsenaars eigenfunctions of the 2-particle hyperbolic Ruijsenaars system as matrix coefficients for the order 4 element $S\in SL(2,\mathbb{Z})$ acting on the Hilbert space of $GL(2)$ quantum Teichm\"uller theory on the punctured torus. The $GL(2)$ Macdonald polynomials are then obtained as special values of the analytic continuation of these matrix coefficients. The main tool used in the proof is the cluster structure on the moduli space of framed $GL(2)$-local systems on the punctured torus, and an $SL(2,\mathbb{Z})$-equivariant embedding of the $GL(2)$ spherical DAHA into the quantized coordinate ring of the corresponding cluster Poisson variety. 

\end{abstract}

\maketitle


\section{Introduction}

The goal of this article is to show that the Halln\"as--Ruijsenaars eigenfunctions of the 2-particle relativistic hyperbolic Ruijsenaars system are the matrix coefficients for the element $S\in SL(2,\mathbb{Z})$ acting on the Hilbert space of $GL(2)$ quantum Teichm\"uller theory on the punctured torus.

Let us explain more precisely what we mean by this, and indicate why such a formula is both natural and useful.
The $N$-particle relativistic hyperbolic Ruijsenaars system~\cite{Rui87} is a system of $N$ commuting difference operators acting on meromorphic functions of $N$ complex variables. The analytic theory of this integrable system was studied extensively in~\cite{HR1, HR2, HR3} and~\cite{BDKK1, BDKK2, BDKK3, BDKK4}, where its joint eigenfunction transform was constructed and shown to define a unitary equivalence between appropriate Hilbert spaces.

The hyperbolic Ruijsenaars operators are close relatives of the Macdonald difference operators from the theory 
of symmetric functions. The action of the latter operators preserves the space of $S_N$-symmetric polynomial functions, where they act diagonalizably with distinct eigenvalues, and their joint eigenfunctions are the well-known Macdonald polynomials.

As discovered by Cherednik in~\cite{Che05}, it is useful to regard the Macdonald difference operators as generators of a commutative subalgebra in a 2-parameteric family of non-commutative algebras  $\Hbb_{q,t}(GL_N)$ known as the \emph{double affine Hecke algebra} (DAHA) associated to the group $GL_N$. Indeed, the DAHA is a quotient of the braid group of $N$ points on a 2-torus, and passing to it reveals a hidden $SL(2,\mathbb{Z})$-symmetry of Macdonald theory. 

In this article we will focus on the case $N=2$, in which connection to the geometry of the torus becomes even easier to describe: the fiber at $q=1$ of the \emph{spherical subalgebra} $\SH_{q,t}$ of the $GL_2$ DAHA recovers the coordinate ring of the $GL_2(\mathbb{C})$-character variety for the punctured torus $T^2\setminus D^2$, with the parameter $t$ related to the eigenvalues of the monodromy around the puncture. In this realization of the algebra, the Macdonald operators are quantizations of the observables given by the fundamental traces of the monodromy of a $GL(2)$-local system around the $(0,\pm1)$-curve on the torus. The commuting subalgebra in $\SH_{q,t}$ associated in the same way to the $(\pm1,0)$-curve is identified with that generated by the operators of multiplication by the elementary symmetric functions acting on the ring of $2$-variable symmetric polynomials. 

To reconnect this algebraic picture with the analytic Ruijsenaars theory, recall from the work of Fock and Goncharov that the punctured torus character variety (or more precisely, its framed version in which local systems are equipped with extra data of a Borel reduction near the puncture) is a \emph{cluster Poisson variety.} Hence by the results of~\cite{FG09}, its quantization delivers a (projective) unitary Hilbert space representation of the torus mapping class group $SL(2,\Z)$, which acts by explicit quantum cluster transformations. The $(0,\pm1)$-curve generators of the spherical DAHA $\SH_{q,t}$, which correspond to the Macdonald--Ruijsenaars operators in Cherednik's representation, act by unbounded, symmetric operators on this Hilbert space. On the other hand, the $(\pm1,0)$-curve generators act in the cluster representation by the Hamiltonians of the 2-particle $q$-difference \emph{open Toda chain.} 

In contrast to its Ruijsenaars counterpart, the quantum open relativistic Toda chain is generally regarded as one of the simplest nontrivial quantum integrable systems, and has a well-developed algebraic and analytic theory, see e.g. the following non-exhaustive list of references~\cite{Giv97, Eti99, Sev99, KLS02, GLO1, GLO2, GLO3, DFKT17, SS18}. In particular, it has a complete, orthogonal set of eigendistributions known as (non-compact, $b$-, or) $q$-Whittaker functions $\Psi_{\bs\la}$, that are symmetric in $\bs\la\in\mathbb{R}^2$. The main point of this article is that the unitary automorphism of the cluster representation given by the action of the element
$$
S=\begin{pmatrix}0&-1\\1&0\end{pmatrix}\in SL(2,\mathbb{Z})
$$
allows one to reduce questions about the Ruijsenaars system into much easier ones about the Toda system. In particular, in Theorem~\ref{theorem2} we obtain a description of the (suitably renormalized, {\it cf.} Proposition~\ref{prop:RH}) Halln\"as--Ruijsenaars eigenfunctions $\Phi_{\bs\mu}^\tau(\bs\la)$ as the matrix coefficient of the element $S\in SL(2,\mathbb{Z})$ between Whittaker functions:
\begin{align}
\label{eq:intro-eq}
\Phi_{\bs\mu}^\tau(\bs\la) = \ha{\Psi_{\bs\la}, S\Psi_{\bs\mu}}.
\end{align}
This matrix coefficient realization of the Halln\"as--Ruijsenaars eigenfunctions makes immediate most of their important properties, including the `bispectral' and `Poincar\'e duality' symmetries $\Phi_{\bs\la}^{\tau}(\bs\mu)=\Phi_{\bs\mu}^\tau(\bs\la)$ and $\Phi_{\bs\mu}^{-\tau}(\bs\la)=\Phi_{\bs\la}^\tau(\bs\mu)$, see Corollaries~\ref{cor:Macdo} and~\ref{cor:cor}.

From the physical point of view, the formula~\eqref{eq:intro-eq} expresses the Halln\"as--Ruijsenaars eigenfunctions as scalar products between eigenstates of two quantum mechanical integrable systems given by traces of holonomies along the $(1,0)$ and $(0,1)$-curves on the torus respectively. In other words, the Halln\"as--Ruijsenaars eigenfunctions are the transition matrix between two geometrically natural bases in the Hilbert space for quantum Teichmuller theory on $T^2\setminus D^2$. This transition matrix realizes the action of the $S$-move generator in the corresponding projective representation of the Moore--Seiberg groupoid, and has been previously computed by Teschner and Vartanov, see formula (6.30) of \cite{TV15}.

In this sense, formula~\eqref{eq:intro-eq} is quite analogous to the well-known description of the $SU_2$ Racah--Wigner $q$-$6j$ symbols which arises in an almost identical way, except that the punctured torus is replaced by the 4-punctured sphere. The semiclassical asymptotics of such scalar products were described geometrically in~\cite{Res18}: the two classical integrable systems determine two presentations of the character variety as Lagrangian fibrations, and the asymptotic expansion of the scalar product takes the form of a sum over intersection points of the two corresponding Lagrangian fibers.

We conclude by explaining in Theorem~\ref{thm:analytic-continuation} the sense in which the Macdonald polynomials and Harish-Chandra series can be recovered as special values of the analytically continued Halln\"as--Ruijsenaars eigenfunctions.

\subsection*{Acknowledgements}

We are grateful to Mingyuan Hu, Nicolai Reshetikhin, J\"org Teschner and Eric Zaslow for helpful discussions. P.D.F. is supported by the Morris and Gertrude Fine Endowment and the Simons Foundation Grant MP-TSM-00002262. R.K. acknowledges support from the Simons Foundation Grant MP-TSM-00001941. S.Kh. is grateful to the University of Edinburgh for its hospitality during the fall semester of 2023.  The work of S. Kh. was supported by the International Laboratory of Cluster Geometry of National Research University Higher School of Economics, Russian Federation Government grant, ag. No. 075-15-2021-608, dated 08.06.2021. G.S. has been supported by the NSF Standard Grant DMS-2302624. A.S. has been supported by the European Research Council under the European Union’s Horizon 2020 research and innovation programme under grant agreement No 948885 and by the Royal Society University Research Fellowship.

\section{Ruijsenaars and open Toda integrable systems}
In this section, we recall the definition of the two integrable systems that play a central role in the paper: the two-particle hyperbolic Ruijsenaars system, and the open $q$-difference Toda chain.

\subsection{Ruijsenaars system}
\label{sec:macdo-ops}
The $\gl_2$ Macdonald difference operators
$$
M_j(\bs\la; g \,|\,\bs\omega),\qquad j=1,2
$$
with periods $\omega_1, \omega_2$ and the coupling constant $g$ are the commuting operators\footnote{Note that our notations differ from those in~\cite{BDKK2}. Namely, $M_1$ and $M_2$ in \emph{loc. \!\!cit.\!} are respectively $M_1 M_2^{-1}$ and $M_2^{-1}$ here.}
\begin{align*}
M_1(\bs\la; g \,|\,\bs\omega) &= \frac{\sh\frac{\pi}{\omega_2}\left(\la_1-\la_2+i g\right)}{\sh\frac{\pi }{\omega_2}\left(\la_1-\la_2\right)} e^{i\omega_1\frac{\partial}{\partial \la_1}} + \frac{\sh\frac{\pi}{\omega_2}\left(\la_1-\la_2+i g\right)}{\sh\frac{\pi}{\omega_2}\left(\la_2-\la_1\right)} e^{i\omega_1\frac{\partial}{\partial \la_2}}, \\
M_2(\bs\la; g \,|\,\bs\omega) &= e^{i\omega_1\left(\frac{\partial}{\partial \la_1}+\frac{\partial}{\partial \la_2}\right)},
\end{align*}
where we write $\bs x$ for a vector $(x_1,x_2) \in \R^2$. Given $b\in\mathbb{R}$, let us specialize
\beq
\label{omega}
\omega_1 = b, \qquad \omega_2=b^{-1}
\eeq
and introduce the pure imaginary constant
$$
c_b = \frac{i(b + b^{-1})}{2}.
$$
We will work in the regime in which the coupling constant $g$ admits a parametrization
\beq
\label{g}
g = -i(c_b+2\tau) \qquad\text{for}\qquad \tau \in \R.
\eeq
Then for $j=1,2$ we recover Macdonald operators $\Mtau_j = M_j(\bs\la; -i(c_b+2\tau) \,|\,b, b^{-1})$, which take the form
\begin{align*}
\Mtau_1 &= \frac{t\La_1 - t^{-1}\La_2}{\La_1-\La_2}T_{\La_1} + \frac{t\La_2 - t^{-1}\La_1}{\La_2-\La_1}T_{\La_2}, \\
\Mtau_2 &= T_{\La_1}T_{\La_2},
\end{align*}
with
$$
\La_j = e^{2\pi b\la_j},\qquad T_{\La_j} = e^{ib\frac{\partial}{\partial \la_j}},
\qquad\text{and}\qquad t = e^{\pi b (2\tau+c_b)}.
$$
The operators $T_{\La_j}$ act on functions $f(\bs\La)$ as shift operators:
$$
T_{\La_1} f(\La_1,\La_2) = f(q^2\La_1,\La_2), \qquad T_{\La_2} f(\La_1,\La_2) = f(\La_1,q^2\La_2)
$$
where
$$
q = e^{\pi i b^2}.
$$
The Macdonald difference operators $\Mtau_j$ are essentially self-adjoint with respect to the Hermitian inner product $\ha{\cdot,\cdot}_S$  on  $L^2_{\mathrm{sym}}(\R^2, m(\bs\la)d\bs\la)$ defined by
\beq
\label{eq:Sklyanin-prod}
\ha{f,g}_S=\int_{\R^2} \overline{f(\bs\la)} g(\bs\la)m(\bs\la)d\bs\la,
\eeq
where
\beq
\label{Sklyanin}
m(\bs\la)d\bs\la =  2\sh(b(\lambda_1-\lambda_2))\sh(b^{-1}(\lambda_1-\lambda_2))d\bs\la
\eeq
is the $\gl_2$ Sklyanin measure.

Joint eigenfunctions of Macdonald operators were obtained in~\cite{HR1} and further studied in~\cite{HR2, HR3}, as well as~\cite{BDKK1}--\cite{BDKK4}. In the specialization~\eqref{omega},~\eqref{g}, these eigenfunctions take form
\beq
\label{eq:HR}
\Rho^\tau_{\bs\mu}(\bs\la) = \varphi(c_b-2\tau_-) e^{-\pi i \underline{\bs\la} \underline{\bs\mu}}
\int_\R e^{-4\pi i x(\la+\tau_-)} \frac{\varphi(x+\mu+\tau_-) \varphi(x-\mu+\tau_-)}{\varphi(x+\mu-\tau_-) \varphi(x-\mu-\tau_-)} dx,
\eeq
where $\varphi(z)$ is the \emph{non-compact quantum dilogarithm} discussed in the appendix,
$$
\tau_- = \frac{c_b}{2}-\tau,
\qquad\text{and}\qquad
\nu = \frac{\nu_1-\nu_2}{2}.
$$
As was shown in~\cite{BDKK2}, $\Rho^\tau_{\bs\mu}(\bs\la)$ are symmetric in the coordinate variables $\bs\la$, symmetric in the spectral variables $\bs\mu$, and satisfy the bispectral duality
$$
\Rho^\tau_{\bs\mu}(\bs\la) = \Rho^{-\tau}_{\bs\la}(\bs\mu).
$$

The Macdonald difference operators also act on the space of $\C(q,t)$-valued Laurent polynomials in $\La$, invariant under the involution $\La \mapsto \La^{-1}$. Their eigenbasis in the latter space is given by the celebrated symmetric Macdonald polynomials, which can be recovered as the Halln\"as--Ruijsenaars wavefunctions restricted to a lattice, see Section 6 for more details.

\begin{remark}
\label{rmk:analyticity-I}
Consider the Macdonald--Ruijsenaars measure
$$
m_g(\bs\la) = S_2(i(\la_1-\la_2)) S_2(i(\la_2-\la_1)+g),
$$
where $S_2$ is the Barnes double sine function from equation~\eqref{eq:barnes}.
The corresponding bilinear pairing $\hr{\cdot,\cdot}_R$ and Hermitian pairing $\ha{\cdot,\cdot}_R$ are defined by
$$
\hr{f,g}_R = \int_{\R^2}f(\bs\la)g(-\bs\la)m_g(\bs\la) d\bs\la, \qquad
\ha{f,g}_R = \int_{\R^2}f(\bs\la)\overline{g(\bs\la)}m_g(\bs\la) d\bs\la.
$$

\begin{figure}[h]
\begin{tikzpicture}
\fill[gray!20] (0,-2) rectangle (2,2);
\draw[->] (-1.0,0) -- (2.7,0);
\draw[dashed, ->] (0,-2) -- (0,2);
\draw[dashed] (2,-2)--(2,2);
\draw[very thick] (1,-2)--(1,2);
\draw[very thick] (0,0)--(2,0);
\fill(2.0,0) circle(1.5pt);
\node[below] at (2.25,0.05)	
{\tiny $2\omega$};
\fill(1.0,0) circle(1.5pt);
\node[below] at (1.2,0)	
{\tiny $\omega$};
\fill(0,0) circle(1.5pt); \node[below] at (0.15,0.05)	
{$\scriptstyle 0$};
\fill(1,1.5) circle(1.5pt); \node[below] at (1.45,1.55)	
{$\scriptstyle \omega+i\tau$};
\end{tikzpicture}
\caption{Regions of unitarity of the Ruijsenaars system}
\label{fig:unitary-regime}   
\end{figure}
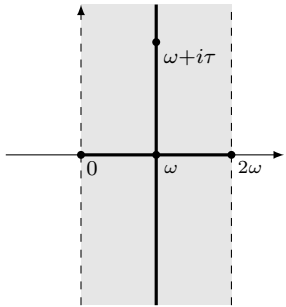 
As shown in e.g.~\cite{BDKK2}, for any periods $\omega_1, \omega_2$, and the coupling constant $g$ in the strip of analyticity of the Ruijsenaars system, that is $0 < \Re(g) < \omega_1+\omega_2$, the Macdonald operators are essentially self-adjoint with respect to the bilinear pairing $\hr{\cdot,\cdot}_R$. Moreover, as was shown in~\cite{BDKK3}, when the coupling constant $g$ is real, the Halln\"as--Ruijsenaars functions are orthogonal and complete with respect to the pairing $\ha{\cdot,\cdot}_R$. It is immediate from~\eqref{eq:qdl-unitary} and~\eqref{inv} that for $\tau \in i\R$ the Halln\"as--Ruijsenaars functions satisfy
$$
\Rho_{\bs\mu}^\tau(-\bs\la) = \zeta_{inv} e^{4\pi i \tau^2} \overline{\Rho_{\bs\mu}^\tau(\bs\la)},
$$
where the constant $\zeta_{inv}$ is given by~\eqref{eq:zetas}, and hence
$$
\hr{\Rho_{\bs\mu}^\tau,\Rho_{\bs\nu}^\tau}_R = \zeta_{inv} e^{4\pi i \tau^2} \ha{\Rho_{\bs\mu}^\tau,\Rho_{\bs\nu}^\tau}_R.
$$
Thus, we have two regimes of unitarity of the Ruijsenaars system, and Macdonald operators are symmetric in both of them. In the first regime the coupling constant $g$ is real and satisfies $0 < g < -ic_b$, equivalently $\tau$ is imaginary with $\hm{\tau} < c_b/2$, and the scalar product is $\ha{\cdot,\cdot}_R$. In the second regime $g = i(\tau -c_b)$ with an arbitrary real $\tau$, and the scalar product is $\ha{\cdot,\cdot}_S$, see~\eqref{eq:Sklyanin-prod}. In \cite{HR1}--\cite{HR3} and~\cite{BDKK1}--\cite{BDKK4} the Ruijsenaars system was studied in the first regime, in the present text we focus on the second. In Figure~\ref{fig:unitary-regime} we visualize the two regimes for $\omega_1=b$, $\omega_2=b^{-1}$, and $\omega=-ic_b$.
\end{remark}

\subsection{Open Toda chain}
The Hamiltonians of the $q$-difference $\gl_2$ open Toda system are the following commuting difference operators 
\begin{align*}
H_1 &= e^{2\pi b p_2} + e^{2\pi b(p_2+x_2-x_1)} + e^{2\pi b p_1}, \\
H_2 &= e^{2\pi b(p_1+p_2)}.
\end{align*}
These operators are evidently symmetric with respect to the Hermitian inner product $\ha{\cdot,\cdot}$ on $L^2(\R^2, d\bs x)$ given by
\beq
\ha{f,g}=\int_{\R^2} \overline{f(\bs x)} g(\bs x) d\bs x.
\eeq
They also commute with the \emph{$Q$-system discrete time evolution operator}~\cite{HKKR00, Wil15, DFK18, DFK24}, which is the unitary automorphism $\sigma_+$ of  $L^2(\R^2, d\bs x)$ given by
\beq
\label{eq:qsys}
\sigma_+ = e^{\pi i\hr{p_1^2+p_2^2}} \varphi(x_2-x_1).
\eeq
\begin{remark}
The reason for our choice of notation for the $Q$-system evolution operator is that it will be re-interpreted as a Dehn twist on the punctured torus in the topological reformulation given in Section~\ref{sec:cluster-torus}. See Remark~\ref{rmk:dehn} for details.
\end{remark}
For $\bs\la \in\mathbb{R}^2$, the \emph{Whittaker function} $\Psi_{\bs\la}(\bs x)$ is a common eigenfunction for the operators $H_1$, $H_2$, and $\sigma_+$ with the following eigenvalues, see~\cite{KLS02, SS18}:
\beq
\label{eigen}
\begin{aligned}
H_j \Psi_{\bs\la}(\bs x) &= e_j(\bs\la) \Psi_{\bs\la}(\bs x), \\
\sigma_+ \Psi_{\bs\la}(\bs x) &= \gamma \Psi_{\bs\la}(\bs x),
\end{aligned}
\eeq
where $e_j(\la)$ is the $j$-th elementary symmetric function in $e^{2\pi b \la_1}$, $e^{2\pi b \la_2}$, and
\beq
\label{eq:gamma-eigen}
\gamma = e^{\pi i\hr{\la_1^2+\la_2^2}}.
\eeq

The Whittaker function admits two explicit presentations, both of which utilize the following convention.

\begin{notation}
\label{contour-convention}
Throughout the paper, we will often consider contour integrals of the form 
$$
\int_{C} \prod_{j,k}\frac{\varphi(t-a_j)}{\varphi(t-b_k)}f(t)dt,
$$
where $f(t)$ is some entire function. Unless otherwise specified, the contour $C$ in such an integral is always chosen to be passing below the poles of $\varphi(t-a_j)$ for all $j$, above the poles of $\varphi(t-b_k)^{-1}$ for all $k$, and escaping to infinity in such a way that the integrand is rapidly decaying. 
\end{notation}

The Gauss--Givental representation of the Whittaker function reads
\beq
\label{GG}
\Psi_{\bs\la}(\bs x) = \zeta^{-1} e^{\pi i c_b (\la_2-\la_1)} e^{2\pi i \la_2 \underline{\bs x}} \int \frac{e^{2\pi i r(\la_1-\la_2) }} {\varphi(r-x_1-c_b)\varphi(x_2-r)}dr,
\eeq
where $\zeta$ is defined in~\eqref{eq:zetas}, and
$$
\underline{\bs x} = x_1+x_2.
$$
On the other hand, the function $\Psi_{\bs\la}(\bs x)$ can also be expressed as a Mellin--Barnes integral:
\beq
\label{MB}
\Psi_{\bs\la}(\bs x) = \zeta e^{\pi i \underline{\bs\la}(2x_2+c_b)} \int e^{2\pi i \alpha(x_1-x_2-c_b)} \varphi(\alpha-\la_1+c_b) \varphi(\alpha-\la_2+c_b) d\alpha.
\eeq
It is manifest from the Mellin--Barnes presentation that $\Psi_{\bs\la}(\bs x)$ is symmetric in the spectral variables $\bs\la$. The following theorem is the main result of~\cite{SS18}.
\begin{theorem}[SS18]
\label{thm:Whittaker}
Let $\Psi^{}_{\bs\la}(\bs x)$ be the {Whittaker} function for the $\gl_n$ $q$-difference open Toda chain. Define the \emph{Whittaker transform} by
\begin{align*}
\Wc \colon L^2(\R^n, d\bs x) &\longra L^2_{\mathrm{sym}}(\R^n, m(\bs\la)d\bs\la) \\
f(\bs x) &\longmapsto \int_{\R^n} f(\bs x) \overline{\Psi_{\bs\la}^{}(\bs x)} d\bs x.
\end{align*}
Then $\mathcal{W}$ defines a unitary equivalence between the Hilbert space of square-integrable functions in $\bs x$ and that of symmetric functions in $\bs\la$ that are square-integrable with respect to the $\gl_n$ Sklyanin measure $m(\bs\la)$ given by
\beq
\label{Sklyanin}
m(\bs\la)d\bs\la = \frac{2^{n(n-1)}}{n!} \prod_{\substack{j<k}}^n \sh(b(\la_j-\la_k))\sh(b^{-1}(\la_j-\la_k))d\bs\la.
\eeq
\end{theorem}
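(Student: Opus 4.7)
The plan is to prove unitarity of $\mathcal{W}$ by exploiting the recursive ``peeling-off'' structure visible in the Gauss--Givental representation~\eqref{GG}, and to establish completeness via an explicit inversion formula whose kernel is $\Psi_{\bs\la}(\bs x)$ itself, weighted by the Sklyanin measure~\eqref{Sklyanin}.

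First, I would extract from the Gauss--Givental formula an iterated factorization of $\Psi^{(n)}_{\bs\la}(\bs x)$ as a chain of $n$ elementary integral operators $K_k$, each with a kernel built from a single non-compact quantum dilogarithm $\varphi$, that intertwine the $\gl_{k-1}$ and $\gl_k$ open Toda chains. Composing these identifies $\mathcal{W}$, up to symmetrization in $\bs\la$, with $K_n\circ\cdots\circ K_1$. The intertwining between $H_j$, $\sigma_+$ on one side and the multiplication operators by $e_j(\bs\la)$, $\gamma$ on the other is already recorded in~\eqref{eigen} and~\eqref{eq:gamma-eigen}, so from abstract operator-theoretic principles it remains only to match the Plancherel measure.

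Second, I would verify unitarity of each elementary intertwiner $K_k$ directly, using the Fourier self-duality of $\varphi$ and the pentagon identity; this is the quantum-dilogarithm analogue of the classical statement that the Mellin transform of the Toda recursion becomes a convolution with a Gamma-function kernel. Iterating, and using that $\Psi_{\bs\la}$ is manifestly $W$-symmetric in $\bs\la$ from the Mellin--Barnes presentation~\eqref{MB}, produces an isometry onto functions in $\bs\la$. The Sklyanin weight $m(\bs\la)$ emerges as the Jacobian reconciling the product measure $\prod_k d\la_k$ on the ``triangular'' side of the Gauss--Givental decomposition with the symmetrized measure on the target, while the combinatorial factor $2^{n(n-1)}/n!$ in~\eqref{Sklyanin} arises from averaging over the Weyl group.

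The step I expect to be the main obstacle is proving surjectivity. Concretely, I would establish the orthogonality/completeness relation
$$
\int_{\R^n}\overline{\Psi_{\bs\la}(\bs x)}\,\Psi_{\bs\la}(\bs y)\,m(\bs\la)\,d\bs\la = \delta(\bs x-\bs y),
$$
by reducing it, via the Gauss--Givental factorization, to a succession of pentagon identities together with a final Fourier inversion in the center-of-mass variable $\underline{\bs x}$. The delicate point is the residue bookkeeping: at each stage of the induction one must justify that the contour deformations required to carry out the Plancherel argument, under the conventions of Notation~\ref{contour-convention}, cross no poles of the $\varphi$-factors, and that the candidate inverse transform
$$
\mathcal{W}^{-1}g(\bs x) = \int_{\R^n} g(\bs\la)\,\Psi_{\bs\la}(\bs x)\,m(\bs\la)\,d\bs\la
$$
extends from a dense Paley--Wiener-type subspace to all of $L^2_{\mathrm{sym}}(\R^n,m(\bs\la)d\bs\la)$. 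This analytic control of the elementary intertwiners --- rather than any new identity for $\varphi$ --- is the bulk of the work.
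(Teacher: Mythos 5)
The paper does not prove this statement: Theorem~\ref{thm:Whittaker} is imported verbatim as the main result of~\cite{SS18} (and, in the rank-two case relevant here, goes back to~\cite{KLS02}), so there is no in-paper argument to compare yours against. Your outline is broadly consistent with the strategy actually used in~\cite{SS18} --- a Givental-type recursive factorization of $\Psi_{\bs\la}$ into elementary intertwiners with quantum dilogarithm kernels, unitarity of each elementary step, and an orthogonality/completeness relation with respect to the Sklyanin measure --- so as a roadmap it points in the right direction.

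As a proof, however, it is a program rather than an argument: the two claims on which everything rests, namely the unitarity of each elementary operator $K_k$ and the distributional identity $\int\overline{\Psi_{\bs\la}(\bs x)}\Psi_{\bs\la}(\bs y)m(\bs\la)d\bs\la=\delta(\bs x-\bs y)$, are exactly the hard analytic content of~\cite{SS18}, and you assert them rather than establish them. Two specific cautions. First, the Sklyanin density $m(\bs\la)$ does not ``emerge as a Jacobian'' of the Gauss--Givental change of variables; it is the Plancherel density, and extracting it requires an actual integral evaluation (a $b$-analogue of a Gustafson-type integral, or equivalently a residue computation controlling the asymptotics of $\Psi_{\bs\la}$), not a measure-theoretic bookkeeping step. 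Second, formal composition of pentagon identities under the contour conventions of Notation~\ref{contour-convention} does not by itself yield $L^2$-boundedness or essential self-adjointness of the unbounded operators involved; the passage from densely defined isometry to unitary equivalence (your surjectivity step) needs the completeness relation proved independently, which is where the bulk of~\cite{SS18} lives. If you intend to cite the result, as the paper does, none of this is needed; if you intend to reprove it, these are the gaps to fill.
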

The Whittaker functions $\Psi_{\bs\la}(\bs x)$ also satisfy difference equations with respect to the spectral variables $\bs\la$.
Indeed, in terms of the \emph{dual Toda operators} $\check H_j$ defined by
\begin{align*}
\check H_1 &= \hr{t\Mtau_1}\Big|_{t=0} = \frac{1}{1-\La_1/\La_2}T_{\La_1} + \frac{1}{1-\La_2/\La_1}T_{\La_2}, \\
\check H_2 &= \Mtau_2,
\end{align*}
the bi-spectrality of the Whittaker functions can be expressed as follows: (see~\cite[Theorem 4.7]{SS19})
\begin{align*}
\Wc \circ H_1 &= e_1(\bs\la) \circ \Wc, & \Wc \circ e^{\pi b c_b} e^{2\pi b x_1} &= \check H_1 \circ \Wc, \\
\Wc \circ H_2 &= e_2(\bs\la) \circ \Wc, & \Wc \circ e^{2\pi b(x_1+x_2)} &= \check H_2 \circ \Wc.
\end{align*}
Following~\cite{DFK18}, let us define the $n$-th discrete time translate
$\check H_{j,n}$ of $\check H_{j,0}=\check H_j$ by
\beq
\label{H-check}
\check H_{j,n} = \gamma^n \check H_j \gamma^{-n}.
\eeq
Using that $\gamma T_{\La_j} \gamma^{-1} = q \La_j T_{\La_j}$, we obtain
$$
\check H_{1,n} = q^n\hr{\frac{\La_1^n}{1-\La_1/\La_2}T_{\La_1} + \frac{\La_2^n}{1-\La_2/\La_1}T_{\La_2}}
$$
and hence
$$
\check H_{1,n} \circ \Wc = \Wc \circ \hr{e^{\pi b c_b} \cdot \sigma_+^n e^{2\pi b x_1} \sigma_+^{-n}}.
$$
Note that by collecting coefficients in $t$, the first Macdonald operator $\Mtau_1$ can be expanded as a linear combination of twisted dual Toda operators:
$$
\Mtau_1 = t^{-1}\check H_{1,0} - q^{-2}\frac{t}{\La_1\La_2} \check H_{1,2}.
$$
As a consequence, we obtain the following intertwining relations:
\begin{align}
\label{eq:macdo-under-whit}
M_1^\tau \circ \Wc &= \Wc \circ \hr{e^{2\pi b(x_1-\tau)} + e^{2\pi b(x_2+\tau)} + e^{2\pi b(x_1+p_1-p_2+\tau)}}, \\
\nonumber M_2^\tau \circ \Wc &= \Wc \circ e^{2\pi b(x_1+x_2)}.
\end{align}
These relations can be interpreted as describing the action of the Macdonald operators $M_j^\tau$ in the basis of Whittaker functions $\Psi_{\bs\la}(\bs x)$, where the latter are regarded as functions of $\bs \lambda$.

\subsection{Spherical double affine Hecke algebra.}
Recall that the elliptic braid group is the fundamental group of the configuration space of $N$ points on the torus $T^2$. 
The double affine Hecke algebra $\Hbb_{q,t}$ for $GL(N)$  is the quotient of the $\mathbb{Q}(q,t)$-group algebra of the elliptic braid group by the Hecke relations
$$
(T_i - t)(T_i + t^{-1}) = 0 , \quad i=1,\ldots N,
$$
where $T_i$ are the standard ``local'' generators corresponding to those of the planar braid group. In addition to these local generators, there are also ``global'' ones, $X_i$ and $Y_i$, corresponding to the loops in the configuration space transporting the $i$-th marked point once along respectively the $(1,0)$- and the $(0,1)$-curves on the torus. Being a quotient of the elliptic braid group, the algebra $\Hbb_{q,t}$ carries an action of the modular group $SL(2;\mathbb{Z})$ by algebra automorphisms.

We skip the precise definition of the \emph{spherical subalgebra} $\SH_{q,t}$ of $\Hbb_{q,t}$, and refer the reader to \cite{Che05} for details. Instead, following \emph{loc.\!\! cit.\!} we recall that $\SH{q,t}$ admits a faithful representation into the algebra of symmetric $q$-difference operators in variables $(\Lambda_1,\Lambda_2)$. The spherical subalgebra contains elements $E_{(\pm k,0)}, E_{(0,\pm k)}$ corresponding to spherical versions of the $k$-th power sum symmetric functions in the standard generators $X^{\pm}_i,Y^{\pm}_i$ of $\Hbb_{q,t}$ respectively. As was shown in~\cite{SV11}, the algebra $\SH_{q,t}$ is generated over $\mathbb{Q}(q,t)$ by elements $E_{(\pm1,0)}, E_{(0,\pm1)}$, see \cite[Corollary 6.1]{BS12}.

Under Cherednik's representation, these generators are mapped to the following $q$-difference operators:
\begin{align*}
E_{(1,0)} &\longmapsto e_1(\bs\la), & E_{(-1,0)} &\longmapsto e_1(\bs\la)e_2(\bs\la)^{-1}, \\
E_{(0,1)} &\longmapsto \Mtau_1, & E_{(0,-1)} &\longmapsto \Mtau_1\hr{\Mtau_2}^{-1}.
\end{align*}
The action of $SL(2;\mathbb{Z})$ preserves the spherical subalgebra: for each $v\in\mathbb{Z}^2$, there is a corresponding element $E_v\in \SH_{q,t}$, and we have $g\cdot E_v = E_{gv}$ for all $g\in SL(2;\mathbb{Z}),~v\in\mathbb{Z}^2$.

\subsection{Strategy}
Let us now outline in more detail the logic of the remainder of the paper. The double affine Hecke algebra formalism suggests that the problem of diagonalizing the Macdonald difference operators can be solved by constructing a unitary representation $\rho$ of $SL(2;\mathbb{Z})$ on $L^2_{\mathrm{sym}}(\R^n, m(\bs\la)d\bs\la)$ compatible with its action on the spherical DAHA. Indeed, the action of the element $S$ would then intertwine the Macdonald operators with the operators of multiplication by elementary symmetric functions:
$$
\Mtau_j \circ \rho(S) = \rho(S) \circ e_j(\la).
$$
The eigenbasis for the latter is given by delta-distributions. Here we must note that while delta-distributions are clearly not in $L^2_{\mathrm{sym}}(\R^n, m(\bs\la)d\bs\la)$, it is only expected that the eigenfunctions of an unbounded operator lie in the space of tempered distributions rather than in the Hilbert space itself. Thus, the eigenbasis for the Macdonald operator would consist of the images of delta-distributions under the $S$ transformation:
$$
\Mtau_j \rho(S) \delta(\la-\mu) = e_j(\mu) \cdot \rho(S) \delta(\la-\mu).
$$

Our strategy for constructing the action of $SL(2;\mathbb{Z})$ on $L^2_{\mathrm{sym}}(\R^n, m(\bs\la)d\bs\la)$ is to identify the latter with the unitary equivalent representation $L^2(\R^n, d\bs x)$ via the Whittaker transform $\Wc$. As we will show, the action of generators of $\SH_{q,t}$ on $L^2(\R^n, d\bs x)$ coincides with that of certain elements of the quantized coordinate ring of a \emph{cluster Poisson variety}. The representation of the latter comes equipped with a compatible action of the group $SL(2;\mathbb{Z})$, which acts on $L^2(\R^n, d\bs x)$ by explicit unitary quantum cluster transformations. Thus the action $\rho(S)$ on $L^2_{\mathrm{sym}}(\R^n, m(\bs\la)d\bs\la)$ can be defined as
$$
\rho(S) = \Wc I_S \Wc^*,
$$
where $I_S$ denotes the action of $S$ on $L^2(\R^n, d\bs x)$, see~\eqref{diag}. Expressing the eigenfunctions of Macdonald operators as
$$
m(\la)^{-1} \rho(S) \delta(\la-\mu),
$$
where $m(\la)^{-1}$ is the normalization constant, and using the fact that
$$
\Wc^* \delta(\la-\mu) = \Psi_{\mu}(x),
$$
we see that the matrix coefficients $\ha{\Psi_\la, I_S \Psi_\mu}$ give an eigenbasis for the Macdonald operators.

\beq
\begin{tikzcd}
L^2_{\mathrm{sym}}(\R^n, m(\bs\la)d\bs\la) \arrow{rr}{\Wc^*} \arrow{dd}{\rho(S)} && L^2(\R^n, d\bs x) \arrow{dd}{I_S} \\ \\
L^2_{\mathrm{sym}}(\R^n, m(\bs\la)d\bs\la) && L^2(\R^n, d\bs x) \arrow[swap]{ll}{\Wc}
\end{tikzcd}
\label{diag}
\eeq

After reviewing the general construction of quantum cluster varieties in Section~\ref{sec:qcv}, the quantum cluster algebra needed to describe the spherical DAHA will be obtained from the moduli space of framed $GL(2)$-local systems on the punctured torus in Section~\ref{sec:cluster-torus}. In Section~\ref{sec:wave} we use the matrix coefficient presentation to rederive the formula for the Ruijsenaars wavefunctions along with some of its properties. Finally, in Section~\ref{sec:poly} we recover Macdonald polynomials from the Ruijsenaars wavefunctions.


\section{Quantum cluster varieties}
\label{sec:qcv}
In this section we review the definition of quantum cluster varieties and their representations. For more details on these subjects, we refer the reader to the foundational paper~\cite{FG09}. We only need skew-symmetric quantum cluster algebras with integer-valued forms, which we incorporate in the definition of a seed.

\begin{defn}
A \emph{seed} is a datum $\Theta=\hr{I, I_0, \La, (\cdot,\cdot),\hc{e_i}}$ where
\begin{itemize}
\item $I$ is a finite set;
\item $I_0 \subset I$ is a \emph{frozen} subset of $I$;
\item $\La$ is a lattice;
\item $(\cdot,\cdot)$ is a skew-symmetric $\Z$-valued form on $\La$;
\item $\hc{e_i \,|\, i \in I}$ is a basis for the lattice $\La$.
\end{itemize}
Note that the data of the last point is equivalent to that of an isomorphism $\bs e \colon \Z^{I} \simeq \La$. In particular, given a pair of seeds $(\Theta,\Theta')$ with the same data $\hr{I,I_0}$, we get a canonical isomorphism of abelian groups (not necessarily isometry of lattices) $\bs{e'} \bs e^{-1} \colon \La \simeq \La'$.
\end{defn}

\begin{defn}
 We say that $(\Theta,\Theta')$ are \emph{equivalent} if the isomorphism $\bs{e'} \bs e^{-1} \colon \La \simeq \La'$ is an isometry, that is $(e_i,e_j)_{\La} = (e'_i, e'_j)_{\La'}$ for all $i,j \in I$. We define a \emph{quiver} to be an equivalence class of seeds.
\end{defn}

The quiver $Q$ associated to a seed $\Theta$ can be visualized as a directed graph with vertices labelled by the set $I$ and arrows given by the adjacency matrix $\eps = \hr{\eps_{ij}}$, where $\eps_{ij} = (e_i,e_j)$. The vertices corresponding to the subset $I_0$ are called \emph{frozen} and are drawn as squares, while those corresponding to $I \setminus I_0$ are referred to as \emph{mutable} and are drawn as circles.
 
The pair $\hr{\Lambda,(\cdot, \cdot)}$ determines a \emph{quantum torus algebra} $\mathcal{T}_\Lambda^q$, which is defined to be the free $\Z[q^{\pm1}]$-module spanned by $\hc{Y_\la \,|\, \la\in\La}$, with the multiplication defined by
$$
q^{(\lambda,\mu)}Y_\lambda Y_\mu = Y_{\lambda+\mu}.
$$
A basis $\hc{e_i}$ of the lattice $\La$ gives rise to a distinguished system of generators for $\mathcal{T}_\Lambda^q$, namely
the elements $Y_i=Y_{e_i}$. This way we obtain a \emph{quantum cluster $\Xc$-chart}
\begin{align}
\label{eq:qtor-presentation}
\Tc_Q^q = \Z[q^{\pm1}]\ha{Y_i^{\pm1} \,|\, i \in I} / \ha{q^{\eps_{jk}}Y_jY_k = q^{\eps_{kj}}Y_kY_j} \simeq \Tc_\La^q.
\end{align}
The generators $Y_i$ are the \emph{quantum cluster $\Xc$-variables}. We note that this presentation of $\Tc_Q^q$ depends only on the quiver and not on the choice of the representative seed. 

A seed also determines a Heisenberg algebra
$$
\heis_\La=\La \otimes_\Z \R \oplus \R c
$$
with generators $\hc{y_\la \,|\, \la\in\La}$ and $c$, Lie bracket
$$
[y_\la,y_\mu]=\hr{\la,\mu}c,
$$
and the $*$-structure
$$
*y_\la=y_\la, \qquad *c=-c.
$$
If the form $\hr{\cdot,\cdot}$ is non-degenerate, the Lie algebra $\heis_\La$ has an irreducible $*$-representation $\rho$ on a Hilbert space $\mathcal{H}$ in which the generators $y_\lambda$ act by unbounded, essentially self-adjoint operators, and the central element $c$ acts by the scalar $1/2\pi i$ with $i = \sqrt{-1}$. We write $\mathrm{Heis}_{\Lambda}$ for the quotient of the universal enveloping algebra $U(\heis_\La)$ by the ideal $\ha{2\pi i c-1}$.

The assignment
$$
Y_\la = e^{2\pi\hbar y_\la}
$$
defines an embedding of $\mathcal{T}^q_Q$ into the set of grouplike elements in $\mathrm{Heis}^\hbar_{\Lambda}$, the central quotient of the $\hbar$-adically completed universal enveloping algebra $U(\heis_\La)$. In the Hilbert space representation $\mathcal{H}$, the quantum torus generators $Y_{\lambda}$ act by unbounded, positive, essentially self-adjoint operators.

If $\Theta,\Theta'$ are two seeds with nondegenerate skew forms representing the same quiver, then the isometry $\Lambda\rightarrow \Lambda'$, $e_i\mapsto e_i'$ determines a canonical isomorphism of Heisenberg algebras $\iota \colon \heis_\La \to \heis_{\La'}$. So by the irreducibility of the canonical representations $\mathcal{H},\mathcal{H}'$, there is a \emph{unique} projective unitary equivalence $\mathbb I \colon \Hc \to \Hc'$ such that 
\beq
\label{eq:unique-iso}
\rho'(\iota(a))\circ \mathbb I= \mathbb I \circ \rho(a) \qquad\text{for all}\qquad a\in\heis_\La,
\eeq
where both sides are understood as maps between the corresponding classical Schwartz spaces in $\mathcal{H}_\Theta,\mathcal{H}_{\Theta'}$. This shows that the data $(\La,\heis_\La,\Hc)$ associated to a nondegenerate quiver is unique up to unique isomorphism. Because of this, we will often abuse language and speak of `the' Heisenberg algebra, or Hilbert space, associated to a given quiver.

When the form $\hr{\cdot,\cdot}$ has nonzero kernel $Z$, the Heisenberg algebra has a family of irreducible $*$-representations $\mathcal{H}_{\chi}$ parametrized by central characters $\chi \colon Z \to \R$. In this case, we need to enrich the data of a quiver in order to unambiguously speak of its Heisenberg algebra and representation as explained above. Indeed, unless the central characters are compatible, the isomorphism in~\eqref{eq:unique-iso} cannot exist. 
\begin{defn}
A \emph{seed with central character} is a pair $(\Theta,\chi)$ where $\Theta$ is a seed and $\chi \colon Z\to\R$ is a linear functional on the kernel $Z$ of the corresponding skew-form. We say that two seeds $(\Theta,\chi)$ and $(\Theta',\chi')$ with central character are equivalent if the underlying seeds are equivalent, and the canonical isometry $\iota \colon \Lambda\rightarrow\Lambda'$ of lattices intertwines the central characters: $\hr{\chi'\circ\iota}|_Z = \chi$. A \emph{quiver with central character} is an equivalence class of seeds with central character.
\end{defn}

Let $\Theta,\Theta'$ be seeds representing quivers $Q,Q'$ (possibly with central characters). We say that the quiver $Q'$ is the \emph{mutation of $Q$ in direction $k\in I\setminus I_0$} if the map 
\beq
\label{eq:mon-mut}
\mu_k \colon \Lambda \longra \Lambda', \qquad e_i \longmapsto 
\begin{cases}
-e'_k &\text{if} \; i=k, \\
e'_i + \max\{(e_i,e_k),0\}e'_k &\text{if} \; i \ne k
\end{cases}
\eeq
is an isometry intertwining the central characters. It is easy to see that $Q'=\mu_k(Q)$ if and only if $ Q=\mu_k(Q')$.
%
The \emph{mutation class} of a quiver $Q$, which we denote by the bold symbol $\bs Q$, is the set of all quivers related to $Q$ by some finite sequence of mutations. 

To each quiver mutation $\mu_k$ we associate an isomorphism of quantum tori
$$
\mu'_k \colon \Tc_Q^q \longra \Tc_{\mu_k(Q)}^q,
$$
and define the \emph{quantum cluster $\Xc$-mutation}
$$
\mu^q_k \colon \Frac(\Tc_{Q}^q) \longra  \Frac(\Tc_{Q'}^q), \qquad f \longmapsto \Psi_q\hr{Y_k'} \mu'_k(f) \Psi_q\hr{Y_k'}^{-1}
$$
where $\Frac(\Tc_Q)$ denotes the skew fraction field of the Ore domain $\Tc_Q$, and $\Psi_q(Y)$ is the (compact) quantum dilogarithm function~\eqref{eq:compact-qdl}.
%
The fact that conjugation by $\Psi_q\hr{Y'_{k}}$ yields a birational automorphism is guaranteed by the integrality of the form~$(\cdot, \cdot)$ and functional equations~\eqref{eqn:q-Gamma}.

\begin{defn}
An element of $\Tc^q_Q$ is said to be \emph{universally Laurent} if its image under any finite sequence of quantum cluster mutations 
is contained in the corresponding quantum torus algebra.  
The \emph{universally Laurent algebra} $\Lbb^q_{\bs Q}$ is the algebra of universally Laurent elements of $\Tc^q_Q$.
\end{defn}
{If $q$ is a complex number, we can also consider the $\mathbb{C}$-algebra given by the corresponding specialization of $\Lbb^q_{\bs Q}$. We abuse notation and denote this algebra by the same symbol.}

\begin{defn}
Let $b\in\mathbb{R}_{>0}$ and consider the specializations
$$
\hbar = b, \quad q = e^{\pi ib^2},\quad \tilde q = e^{\pi ib^{-2}},
$$
so that $q,\tilde q$ are complex numbers on the unit circle.
The algebra
$$
\Lbb_{\bs Q} = \Lbb^{q, \tilde q}_{\bs Q} = \Lbb^q_{\bs Q} \otimes_{\mathbb{C}} \Lbb^{\tilde q}_{\bs Q}
$$
is called the \emph{modular double} of the universally Laurent algebra $\Lbb^q_{\bs Q}$. The maximal joint domain $\mathcal{S}_Q\subset\mathcal{H}_Q$ for the action of $\Lbb_{\bs Q}$ is called the \emph{Fock-Goncharov Schwartz space}.
\end{defn}

\begin{remark}
In a more general setup, when the bilinear form $(\cdot,\cdot)$ is not required to be integer-valued, the cross-relations between elements of $\Tc^q_Q$ and $\Tc^{\tilde q}_Q$ are set to be
$$
Y_j \widetilde Y_k = e^{2 \pi i \eps_{kj}} \widetilde Y_k Y_j.
$$
\end{remark}

The collection of quantum charts $\Tc^q_Q$ with $Q \in \bs Q$, together with quantum cluster $\Xc$-mutations is often referred to as the \emph{quantum cluster $\Xc$-variety}. We regard the quantum charts as the quantized algebras of functions on the toric charts in the atlas for the classical cluster Poisson variety.
The quantum charts form an $\ell$-regular tree with $\ell = |I \setminus I_0|$, and the cluster mutations quantize the gluing data between adjacent charts. The universally Laurent algebra is the quantum analog of the algebra of global functions on the cluster variety. Unless otherwise specified in what follows, we will simply write `cluster variety' for quantum cluster $\Xc$-variety --- the same applies to variables, charts, mutations, etc.

The \emph{(quasi-)cluster modular groupoid} associated to a cluster variety is defined as follows. 

\begin{defn}
Let $Q,Q'$ be two quivers with the same label sets $(I,I_0)$.  We define a \emph{generalized permutation} to be a monomial isomorphism of quantum tori $\vartheta \colon \Tc_{Q}\rightarrow \Tc_{Q'}$ such that 
for some permutation $\sigma \colon (I\setminus I_0)\rightarrow (I\setminus I_0)$ of the non-frozen directions we have $\vartheta(Y_i)  = Y'_{\sigma(i)}$ for all $i\in I\setminus I_0$. If the quivers $Q,Q'$ are equipped with central characters, we require these to be intertwined by the monomial map $\vartheta$.
\end{defn}
In particular, the non-frozen submatrices of the matrices $\eps$, $\eps'$ for quivers $Q$, $Q'$ related by a generalized permutation $\vartheta$ are conjugate under an actual permutation $\sigma$ of the non-frozen set $I\setminus  I_0$. 

By the same construction as~\eqref{eq:unique-iso} with $\vartheta$ playing the role of $\iota$, we associate to each quasicluster transformation $\vartheta$ a projective unitary transformation
$
\mathbb{I}_\vartheta \colon \mathcal{H}_Q\longra\mathcal{H}_{Q'}.
$



\begin{defn}
Let $Q,Q'$ be two quivers with corresponding quantum tori $\Tc_Q,\Tc_Q'$ as in~\eqref{eq:qtor-presentation}. A \emph{quasi-cluster transformation} with source $Q$ and target $Q'$ is a non-commutative birational isomorphism $\Tc_Q\dashrightarrow \Tc_Q'$  which can be factored as a composition of cluster mutations and generalized permutations.
\end{defn}

\begin{defn}
The \emph{quasi-cluster modular groupoid} is the groupoid $\Gc_{\bs Q}$ whose objects are quivers $Q \in \bs Q$ {(possibly with central characters)}, and whose morphisms are quasi-cluster transformations. The \emph{quasi-cluster modular group}, denoted $\Gamma_{\bs Q}$, is the automorphism group of an object in $\Gc_{\bs Q}$.
\end{defn}
\begin{remark}
Any element of the quasi-cluster modular group restricts to an automorphism of the universally Laurent algebra $\Lbb_{\bs Q}$.
\end{remark}

Denote by $\Hilb$ the category whose objects are Hilbert spaces and whose morphisms are projective unitary transformations. At this point, we have associated to each object $Q$ in the groupoid $\Gc_{\bs Q}$ a Hilbert space $\mathcal{H}_Q$, and associated unitary intertwiners $\mathbb{I}_\vartheta$ to those morphisms in $\Gc_{\bs Q}$ given by generalized permutations $\vartheta$. In~\cite{FG09}, it was further shown how to associate intertwiners $\mathbb{I}_{\mu_k}$ to the mutation morphisms in $\Gc_{\bs Q}$, which send $\mathcal{S}_Q$ to $\mathcal{S}_{Q'}$ and give rise to well-defined functor $\mathbb{I} \colon \Gc_{\bs Q}\rightarrow \Hilb$, that is a \emph{projective representation} of the groupoid $\Gc_{\bs Q}$.

Let us briefly recall the construction of the intertwiner corresponding to a mutation $\mu_k \colon Q\rightarrow Q'$. Once again, the construction ~\eqref{eq:unique-iso} with the monomial isomorphism $\mu_k'$ playing the role of $\iota$ provides a projective unitary transformation 
$
\mathbb{I}_{\mu_k'} \colon \mathcal{H}_Q\rightarrow\mathcal{H}_{Q'}. 
$
On the other hand, by the property~\eqref{eq:qdl-unitary} of the quantum dilogarithm function $\varphi(z)$ it follows that any self-adjoint operator $y$ on $\mathcal{H}$ defines a unitary operator $\varphi(y) \colon \mathcal{H}\simeq\mathcal{H}$. The unitary transformation $\mathbb{I}_{\mu_k} \colon \mathcal{H}_Q \to \mathcal{H}_{Q'}$ is then defined to be
$$
\mathbb{I}_{\mu_k} = \mathbb{I}_{\mu_k'}\circ \varphi(-y_k)^{-1}=\varphi(y'_k)^{-1}\circ \mathbb{I}_{\mu_k'}.
$$
That this recipe is indeed well-defined is a difficult theorem, which is the main analytical result of the paper~\cite{FG09}. In the following section, we will illustrate how to use it in practice with a concrete example.

\section{Quantum Teichm\"uller theory for the punctured torus}
\label{sec:cluster-torus}
In this section we illustrate the previous definitions in our main example, the moduli space of framed $GL_2$ local systems on a punctured torus.

\subsection{Quantized moduli space of framed $GL_2$-local systems on a punctured torus.}

Consider the quiver $Q$ shown on Figure~\ref{fig:quiver} and the associated cluster variety. The kernel of the corresponding skew-form is spanned by the vector $z = -e_1-e_2-e_3$. For $\tau\in\mathbb{R}$, we write $Q_\tau$ for the quiver with central character defined by $\mathcal{X}_\tau(z)={2}\tau$. The latter corresponds to a relation
$$
Y_1Y_2Y_3 = q^2e^{-4\pi b \tau}.
$$

\begin{figure}[h]
\begin{tikzpicture}[every node/.style={inner sep=0, minimum size=0.45cm, thick, draw, circle}, thick, x=1cm, y=0.866cm]

\node (1) at (0,0) {\footnotesize{1}};
\node (2) at (-1,2) {\footnotesize{2}};
\node (3) at (1,2) {\footnotesize{3}};
\node[rectangle] (4) at (-2,0) {\footnotesize{4}};
\node[rectangle] (5) at (2,0) {\footnotesize{5}};

\draw [->] (3.165) -- (2.15);
\draw [->] (1.45) -- (3.-105);
\draw [->] (1.75) -- (3.-135);
\draw [->] (2.-45) -- (1.105);
\draw [->] (2.-75) -- (1.135);
\draw [->] (3.-165) to (2.-15);

\draw[->] (1) to (4);
\draw[->] (4) to (2);
\draw[->] (3) to (5);
\draw[->] (5) to (1);

\end{tikzpicture}
\caption{Quiver $Q$.}
\label{fig:quiver}
\end{figure}
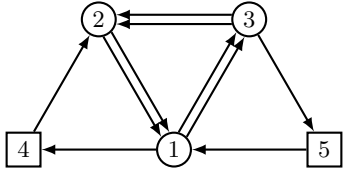

The mutable part of the quiver is known as the \emph{Markov quiver}, and is isomorphic to any of its mutations. The Markov quiver describes the cluster structure on the moduli space of framed $SL_2$-local systems on a punctured torus, where the framing data consists of a flag (a line in rank 1), invariant under the monodromy around the puncture, see~\cite{FG06}. The additional frozen vertices will be used to parametrize the determinants of holonomies around loops on the punctured torus, as we now explain.

At the quantum level, the algebra $\Lbb_{\bs Q}$ contains elements corresponding to the traces and determinants of holonomies around closed simple curves. For $(m,n) \in \Z^2$, we denote by $L_{(m,n)}$ and $\Delta_{(m,n)}$ the elements corresponding respectively to the trace and the determinant of the monodromy along the $(m,n)$-curve on the punctured torus. These elements can be constructed explicitly as follows.
We choose a basis in $H_1(T^2\setminus D^2;\Z)\simeq \mathbb{Z}^2$ so that the horizontal cycle, passing from left to right through nodes 1 and 2 in Figure~\ref{fig:torus}, is of homology class $(1,0)$. Then we set
\beq
\label{eq:SD10}
L_{(1,0)} = Y_{e_4} + Y_{e_2+e_4} + Y_{e_1+e_2+e_4}, \qquad \Delta_{(1,0)} = Y_{e_1+e_2+2e_4}.
\eeq
As was shown in~\cite[Proposition 3.4]{SS19}, elements $L_{(1,0)}$ and $\Delta_{(1,0)}$ are universally Laurent:
$$
L_{(1,0)}, ~\Delta_{(1,0)} \in \Lbb_{\bs Q}.
$$

\begin{figure}[h]
\begin{tikzpicture}[every node/.style={inner sep=0, minimum size=0.45cm, thick, draw, circle}, thick, x=0.8cm, y=0.8cm]

\draw[very thick, gray!50] (-2,-2) to (-2,2) to (2,2) to (2,-2) to (-2,-2) to (2,2);

\node[fill=white] (1) at (0,0) {\footnotesize{1}};
\node[fill=white] (2_1) at (-2,0) {\footnotesize{2}};
\node[fill=white] (2_2) at (2,0) {\footnotesize{2}};
\node[fill=white] (3_1) at (0,2) {\footnotesize{3}};
\node[fill=white] (3_2) at (0,-2) {\footnotesize{3}};

\draw [->] (1) -- (3_1);
\draw [->] (3_1) -- (2_1);
\draw [->] (2_1) -- (1);
\draw [->] (1) -- (3_2);
\draw [->] (3_2) -- (2_2);
\draw [->] (2_2) -- (1);

\end{tikzpicture}
\caption{Cluster quiver from a punctured torus.}
\label{fig:torus}
\end{figure}
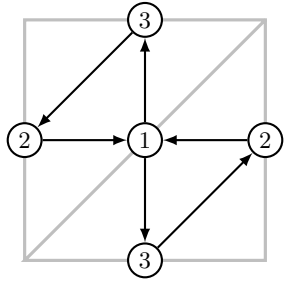

Formulas~\eqref{eq:SD10} can be written down with the help of the following standard combinatorial recipe, see e.g.~\cite[Section 9]{FG06}. Consider the bipartite graph on a cylinder, shown on the left of Figure~\ref{fig:network}. On the right we show a dual quiver, which we use to define a cluster variety, with edges directed in such a way that the white vertex of the bipartite graph stays on the right as we traverse an edge. The direction of edges of the bi-partite graph is additional data, which allows one to express a monodromy matrix $M$ in cluster coordinates. Namely, we set
$$
M_{ij} = \sum_{p : \, i \to j} Y_{\wt(p)}, \qquad\text{where}\qquad \wt(p) = \sum_{f \, \text{below} \, p} e_f.
$$
The first sum in the formula above is taken over all paths in the directed bipartite graph from $i$-th source to the $j$-th sink, while the second is taken over all faces lying below the path $\wt(p)$. Note that the matrix $M$ only depends on variables 1, 2, and 4, but not on the variable 0. Since the monodromy matrix is only defined up to conjugation, we shall be looking at the symmetric functions of its eigenvalues: the trace and the determinant in our case, which are equal to $L_{1,0}$ and $\Delta_{1,0}$ respectively. Let us also note that in order to work with $SL_2$ rather than $GL_2$ local systems one only needs to specialize $y_4 = -\frac12(y_1+y_2)$.

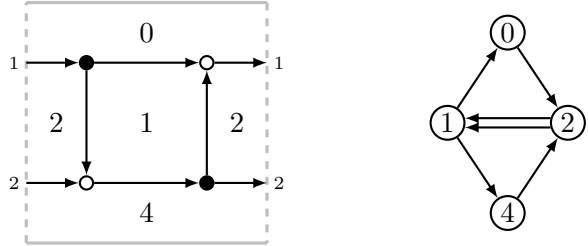
\begin{figure}[h]
\begin{tikzpicture}[every node/.style={inner sep=0, minimum size=0.45cm, thick, draw, circle}, thick, x=0.8cm, y=0.8cm]

\draw[very thick, dashed, gray!50] (-2,-2) to (-2,2);
\draw[very thick, gray!50] (-2,2) to (2,2);
\draw[very thick, dashed, gray!50] (2,2) to (2,-2);
\draw[very thick, gray!50] (2,-2) to (-2,-2);

\node[fill=white, draw=none] at (0,1.5) {0};
\node[fill=white, draw=none] at (0,0) {1};
\node[fill=white, draw=none] at (-1.5,0) {2};
\node[fill=white, draw=none] at (1.5,0) {2};
\node[fill=white, draw=none] at (0,-1.5) {4};

\node[fill=white, minimum size=5pt] (p1p1) at (1,1) {};
\node[fill=white, minimum size=5pt] (m1m1) at (-1,-1) {};
\node[fill, minimum size=5pt] (m1p1) at (-1,1) {};
\node[fill, minimum size=5pt] (p1m1) at (1,-1) {};

\draw[->] (-2,1) to (m1p1);
\draw[->] (m1p1) to (p1p1);
\draw[->] (p1p1) to (2,1);
\draw[->] (-2,-1) to (m1m1);
\draw[->] (m1m1) to (p1m1);
\draw[->] (p1m1) to (2,-1);
\draw[->] (m1p1) to (m1m1);
\draw[->] (p1m1) to (p1p1);

\node[fill=white] (0) at (6,1.5) {0};
\node[fill=white] (1) at (5,0) {1};
\node[fill=white] (2) at (7,0) {2};
\node[fill=white] (4) at (6,-1.5) {4};

\draw[->] (2.165) to (1.15);
\draw[->] (2.-165) to (1.-15);
\draw[->] (1) to (4);
\draw[->] (4) to (2);
\draw[->] (1) to (0);
\draw[->] (0) to (2);

\node[fill=white, draw=none, minimum size=0] at (-2.2,1) {\tiny 1};
\node[fill=white, draw=none, minimum size=0] at (-2.2,-1) {\tiny 2};
\node[fill=white, draw=none, minimum size=0] at (2.2,1) {\tiny 1};
\node[fill=white, draw=none, minimum size=0] at (2.2,-1) {\tiny 2};

\end{tikzpicture}
\caption{Directed network and dual cluster quiver.}
\label{fig:network}
\end{figure}

The mapping class group of a punctured torus is isomorphic to $SL(2,\Z)$, and is generated by elements
$$
\sigma_+ =
\begin{pmatrix}
1 & 1 \\
0 & 1
\end{pmatrix}
\qquad\text{and}\qquad
\sigma_- =
\begin{pmatrix}
1 & 0 \\
1 & 1
\end{pmatrix}
$$
which correspond to the Dehn twists of the torus along closed simple curves with homology classes $(1,0)$ and $(0,1)$ respectively. By the construction in~\cite[Section 6]{FG09}, we get a homomorphism $SL(2,\Z) \to \Gamma_{\bs Q_\tau}$ sending
$$
\sigma_+^{-1} \longmapsto m_+ \circ \mu_1^q, \qquad
\sigma_- \longmapsto m_- \circ \mu_3^q,
$$
where $m_+ \colon \mathcal{T}_{\mu_1(Q_\tau)} \to \mathcal{T}_{Q_\tau}$ and $m_- \colon \mathcal{T}_{\mu_3(Q_\tau)} \to \mathcal{T}_{Q_\tau}$
 are the following generalized permutations:
 \beq
\label{eq:qcmplus}
\begin{aligned}
m_+ &= \hc{e'_1 \mapsto e_2, \, e'_2 \mapsto e_1, \, e'_3 \mapsto e_3, \, e'_4 \mapsto e_4, \, e'_5 \mapsto e_1+e_4+e_5}, & &e'_j = \mu_1(e_j), \\
m_- &= \hc{e'_1 \mapsto e_3, \, e'_2 \mapsto e_2, \, e'_3 \mapsto e_1, \, e'_4 \mapsto -e_1-e_3+e_4-e_5, \, e'_5 \mapsto e_5}, & &e'_j = \mu_3(e_j).
\end{aligned}
\eeq
The element $\sigma$ of order 6 defined by
$$
\sigma = \sigma_+^{-1}\sigma_- = \begin{pmatrix} 0 & -1 \\ 1 & 1 \end{pmatrix}
$$
is mapped under this homomorphism to a generalized permutation:
$$
\sigma \longmapsto \hc{e_1 \mapsto e_3, \, e_2 \mapsto e_1, \, e_3 \mapsto e_2, \, e_4 \mapsto -e_1-e_3-e_5, \, e_5 \mapsto e_1+e_4+e_5}.
$$
Most important for us will be the order 4 element
$$
S = \sigma_+^{-1}\sigma_-\sigma_+^{-1} = \begin{pmatrix} 0 & -1 \\ 1 & 0 \end{pmatrix}.
$$
Given $g = \begin{pmatrix} a & b \\ c & d \end{pmatrix} \in SL(2,\Z)$ we define 
\beq
\label{Sac}
L_{(a,c)} = g\cdot L_{(1,0)}, \qquad \Delta_{(a,c)} = g\cdot \Delta_{(1,0)}
\eeq
The above definitions make sense thanks to the fact that
$$
\sigma_+\hr{L_{(1,0)}} = L_{(1,0)}
\qquad\text{and}\qquad
\sigma_+\hr{\Delta_{(1,0)}} = \Delta_{(1,0)}.
$$
For example, the element $S$ satisfies
\beq
\label{10to01}
S\hr{L_{(1,0)}} = L_{(0,1)}, \qquad S\hr{\Delta_{(1,0)}} = \Delta_{(0,1)}.
\eeq

Note that $g L_{(1,0)}$ is universally Laurent for any $g \in SL(2,\Z)$ since $L_{(1,0)}$ is. Using~\eqref{Sac} we now compute
\begin{align*}
L_{(0,1)} &= Y_{e_5}^{-1} + Y_{e_3+e_5}^{-1} + Y_{e_1+e_3+e_5}^{-1}, & \Delta_{(0,1)} &= Y_{e_1+e_3+2e_5}^{-1},
\end{align*}
as well as
\begin{align*}
L_{(0,-1)} &= Y_{e_5} + Y_{e_1+e_5} + Y_{e_1+e_3+e_5}, & \Delta_{(0,-1)} &= Y_{e_1+e_3+2e_5}, \\
L_{(1,-1)} &= Y_{e_1+e_4+e_5} + Y_{e_1+e_3+e_4+e_5} + Y_{e_1+e_2+e_3+e_4+e_5}, & \Delta_{(1,-1)} &= Y_{2e_1+e_2+e_3+2e_4+2e_5}.
\end{align*}
Finally we observe that for any $a,b,c,d \in \Z$, satisfying $ad-bc=1$, we have
$$
\Delta_{(b,d)}\Delta_{(-b,-d)} = 1, \qquad
L_{(-b,-d)} = \Delta_{(b,d)}^{-1}L_{(b,d)}, \qquad
\hs{L_{(a,c)},L_{(b,d)}} = {\hr{q^{-1}-q}} L_{(a+c,b+d)}.
$$
Indeed, the above relations hold for $(a,b,c,d) = (0,1,-1,0)$, and the general case follows from~\eqref{Sac}.

\subsection{Unitary representation of $SL(2,\mathbb{Z})$}
Following the construction of Section~\ref{sec:qcv}, we consider the Hilbert space $\mathcal{H}_Q$ associated to the quiver $Q_\tau$ from Figure~\eqref{fig:quiver}. In the approach to quantum Teichm\"uller theory of~\cite{FG09}, the  Hilbert space $\mathcal{H}_Q$ is the one assigned to the punctured torus.

We now compute the action of the generators $\sigma_\pm$ on $\mathcal{H}_Q$.  In order to write explicit formulas, let us fix the isomorphism $\mathcal{H}_{Q_\tau}\simeq L_2(\mathbb{R}^2)$ such that the action of the Heisenberg generators on a Schwartz function $f(x_1,x_2)$ is given by
%
%
\beq
\label{eq:polarization}
\begin{aligned}
&y_1 \longmapsto p_1-p_2+x_1-x_2, & &\qquad y_2 \longmapsto x_2-x_1, & &\qquad y_3 \longmapsto p_2-p_1-2\tau, \\
&y_4 \longmapsto p_2, & &\qquad y_5 \longmapsto -x_1+\tau,
\end{aligned}
\eeq
where
$$
p_j = \frac{1}{2\pi i}\frac{\partial}{\partial x_j}.
$$
Under this isomorphism, the quantum holonomies act by the operators
\beq
\label{eq:L-action}
\begin{aligned}
L_{(1,0)} &\longmapsto e^{2\pi b p_2} + e^{2\pi b(p_2+x_2-x_1)} + e^{2\pi bp_1}, & \Delta_{(1,0)} &\longmapsto e^{2\pi b(p_1+p_2)}, \\
L_{(0,1)} &\longmapsto e^{2\pi b(x_1-\tau)} + e^{2\pi b(p_1-p_2+x_1+\tau)} + e^{2\pi b(x_2+\tau)}, & \Delta_{(0,1)} &\longmapsto e^{2\pi b(x_1+x_2)}.
\end{aligned}
\eeq
It is then easy to check that the projective transformations $\mathbb{I}_{\sigma_\pm} \colon \mathcal{H}_{Q_\tau}\rightarrow\mathcal{H}_{Q_\tau}$ take the following form:
\beq
\label{eq:dehn-actions}
\begin{aligned}
\mathbb{I}_{\sigma_+^{-1}} &= e^{-\pi i(p_1^2+p_2^2)} \varphi(p_2-p_1+x_2-x_1)^{-1},\\
\mathbb{I}_{\sigma_-} &= e^{2\pi i \tau(x_1-x_2)} e^{-\pi i(x_1^2+x_2^2)} \varphi(p_1-p_2+2\tau)^{-1}.
\end{aligned}
\eeq
To illustrate the construction, we spell out the derivation of the projective transformation $\sigma_+^{-1}$. Composing the maps~\eqref{eq:mon-mut} and~\eqref{eq:qcmplus}, we find that the monomial automorphism $m_+\circ \mu_1'$ of $\mathfrak{heis}_{Q_\tau}$ is given by
$$
\begin{aligned}
&y_1 \longmapsto -y_2, & &\qquad y_2 \longmapsto y_1+2y_2, & &\qquad y_3 \longmapsto y_3, \\
&y_4 \longmapsto y_4, & &\qquad y_5 \longmapsto y_1+y_2+y_4+y_5.
\end{aligned}
$$
The monomial part $\mathbb{I}_{m_+}\circ \mathbb{I}_{\mu_1'}$ of the intertwiner $\mathbb{I}_{\sigma_+^{-1}}$ is then defined as the unique projective transformation satisfying the intertwining relations
$$
(\mathbb{I}_{m_+}\circ \mathbb{I}_{\mu_1'})\circ y_k = (m_+\circ \mu_1')(y_k)\circ (\mathbb{I}_{m_+}\circ \mathbb{I}_{\mu_1'}) \qquad\text{for all}\qquad 1\le k\le 5.
$$
It is straightforward to check that under the isomorphism~\eqref{eq:polarization}, these relations are satisfied by the operator $e^{-\pi i(p_1^2+p_2^2)}$. The formula for $\sigma_-$ is obtained in the same way.

Rewriting the latter expression as
$$
\mathbb{I}_{\sigma_-} = \varphi(p_1-p_2+x_1-x_2)^{-1}  e^{2\pi i \tau(x_1-x_2)} e^{-\pi i(x_1^2+x_2^2)},
$$
we may use the inversion formula~\eqref{inv} to see that the element $\sigma$ is mapped to the projective transformation
\beq
\label{spm}
\mathbb{I}_{\sigma} \longmapsto e^{2\pi ix_1x_2} \Fc e^{2\pi i\tau(x_1-x_2)}.
\eeq
Here $\Fc$
is the Fourier transform that may be written as
$$
\Fc \colon f(\bs x) \longmapsto \int_{\R^2} f(\bs y) e^{-2\pi i \bs x \cdot \bs y} d\bs y,
$$
or, see e.g.~\cite{Vol05}, in an equivalent form
$$
\Fc = i e^{-\pi i (p_1^2+p_2^2)} e^{-\pi i (x_1^2+x_2^2)} e^{-\pi i (p_1^2+p_2^2)}.
$$ 

In fact, choosing the following lifts of the projective transformations $\mathbb{I}_{\sigma_+^{-1}}, \mathbb{I}_{\sigma_-}$ to linear ones defines a lift of the projective represention $\mathbb{I}$ to a unitary representation $I$  of the same group $SL(2,\mathbb{Z})$:
\beq
\label{eq:dehn-action}
\begin{aligned}
{I}_{\sigma_+^{-1}} &= \zeta_se^{\frac23\pi i \tau^2}e^{-\pi i(p_1^2+p_2^2)} \varphi(p_2-p_1+x_2-x_1)^{-1},\\
{I}_{\sigma_-} &= \zeta_s^{-1}e^{-\frac43\pi i \tau^2}e^{2\pi i \tau(x_1-x_2)} e^{-\pi i(x_1^2+x_2^2)} \varphi(p_1-p_2+2\tau)^{-1},
\end{aligned}
\eeq
where we have set
\beq
\label{eq:zetas}
\zeta_s = e^{{\frac{\pi i}{6}(1-c_b^2) }}.
\eeq
Most important for us will be the unitary automorphism $I_S$ of $L^2(\R^2,d\bs x)$: 
\beq
\label{eq:S}
I_S = I_{\sigma_+^{-1}}I_{\sigma_-}I_{\sigma_+^{-1}}.
\eeq

\begin{prop}
\label{prop:S2}
The element $I_S^2$ acts on $L^2(\R^2,d\bs x)$ by
$$
(I_S^2f)(\bs x) = f(\bs{x}^*) \qquad\text{where}\qquad \bs x^* = (-x_2,-x_1).
$$
In particular, $I_S$ is a unitary automorphism of $L^2(\R^2,d\bs x)$ of order 4.
\end{prop}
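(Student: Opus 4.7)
The plan is to reduce the computation of $I_S^2$ to a tractable operator identity by exploiting the fact that the lifts in~\eqref{eq:dehn-action} define an honest unitary representation of $SL(2,\mathbb{Z})$, so that $I_S^2 = I_{S^2}$. A direct matrix computation in $SL(2,\mathbb{Z})$ shows that $S^2 = -\mathrm{Id}$, and that $\sigma = \sigma_+^{-1}\sigma_-$ satisfies $\sigma^3 = -\mathrm{Id}$. Hence $I_S^2 = I_\sigma^3$, and the problem reduces to computing $I_\sigma^3$.

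To this end, I would use the expression~\eqref{spm}, which gives $\mathbb{I}_\sigma = M\Fc T$ projectively, where $M = e^{2\pi i x_1 x_2}$ and $T = e^{2\pi i\tau(x_1-x_2)}$. Combined with~\eqref{eq:dehn-action}, this yields $I_\sigma = \kappa \cdot M\Fc T$ for a specific phase $\kappa$ depending on $\zeta_s$ and $\tau$. I would then expand $(M\Fc T)^3$ and commute the Fourier factors through the multiplication operators using
$$
\Fc x_j \Fc^{-1} = -p_j, \qquad \Fc p_j \Fc^{-1} = x_j, \qquad \Fc^2 x_j \Fc^{-2} = -x_j, \qquad \Fc^2 p_j \Fc^{-2} = -p_j,
$$
reducing the cube to $(\text{scalar phase})\cdot J$, where $J$ is the candidate operator on the right-hand side of the proposition. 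The key underlying observation is that the $4\times 4$ symplectic matrix implementing $(M\Fc)^3$ on the Heisenberg generators coincides with that of $J$, which acts as $(x_1,x_2,p_1,p_2)\mapsto(-x_2,-x_1,-p_2,-p_1)$; this is an explicit finite matrix computation. The translational contributions from $T$ can be shown to sum to zero after three applications by a similar direct check, using that the image of the shift vector under the symplectic part of $(M\Fc)^3$ is the negative of the original.

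The main obstacle is verifying that the accumulated scalar phase equals exactly $1$. It receives contributions from (a) the prefactor $\kappa^3$ coming from~\eqref{eq:dehn-action}; (b) the metaplectic cocycle incurred when commuting the Gaussian $M$ past the Fourier transform $\Fc$; and (c) Baker--Campbell--Hausdorff phases picked up when commuting the translation $T$ past the symplectic factors. Their precise cancellation reflects exactly the statement that~\eqref{eq:dehn-action} defines an honest unitary representation of $SL(2,\mathbb{Z})$ rather than a merely projective one, and can be checked by a direct, if somewhat tedious, computation; in particular the specific values $\zeta_s = e^{\pi i(1-c_b^2)/6}$ and the exponents $\pm\tfrac{2}{3}$, $-\tfrac{4}{3}$ of $\pi i\tau^2$ in~\eqref{eq:dehn-action} are essential. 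Once $I_S^2 = J$ is established, the order-$4$ claim follows immediately from $J^2 = \mathrm{Id}$.
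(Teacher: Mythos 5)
Your underlying computational mechanism --- use the inversion formula \eqref{inv} to cancel the quantum dilogarithms in pairs and reduce everything to a metaplectic (Gaussian) calculation whose symplectic part is read off from the action on the Heisenberg generators --- is the same one the paper uses, which computes $I_S^2=(I_{\sigma_+^{-1}}I_{\sigma_-}I_{\sigma_+^{-1}})^2$ directly from \eqref{inv}, the factorization $\Fc=i e^{-\pi i(p_1^2+p_2^2)}e^{-\pi i(x_1^2+x_2^2)}e^{-\pi i(p_1^2+p_2^2)}$, and the identity $(e^{2\pi ip_1p_2}e^{2\pi ix_1x_2}e^{2\pi ip_1p_2}f)(\bs x)=(\Fc f)(\bs x^*)$.

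However, your first step has a genuine gap. Writing $A=I_{\sigma_+^{-1}}$ and $B=I_{\sigma_-}$, the definition \eqref{eq:S} gives $I_S^2=ABAABA$, whereas $I_\sigma^3=(AB)^3=ABABAB$; these words agree if and only if the operator-level braid relation $ABA=BAB$ holds, which is equivalent to the assertion that the chosen scalar lifts define an honest (non-projective) representation of $SL(2,\Z)$. You invoke exactly that assertion to justify $I_S^2=I_{S^2}=I_\sigma^3$, but it is nowhere proved at this point in the paper: it is only stated, and the relation $I_S^4=\mathrm{Id}$ being established in Proposition \ref{prop:S2} is one of the defining relations one must verify in order to prove it. Indeed, if the honest-representation property could be assumed, the whole proposition would follow trivially from $S^4=\mathrm{Id}$ in $SL(2,\Z)$, which shows it cannot serve as an input. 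Even a complete verification of your phase bookkeeping for $(AB)^3$ would only establish $I_\sigma^6=\mathrm{Id}$, not $I_S^4=\mathrm{Id}$, absent the braid relation. To repair the argument you must either compute $(ABA)^2$ directly, as the paper does, or first prove $ABA=BAB$ as an exact operator identity (its projective version follows from the pentagon identity \eqref{pentagon}, but the scalar factors must be tracked) and only then pass to $I_\sigma^3$. A secondary issue: your justification for the cancellation of the translation factors is not right as stated --- the involution $\bs x\mapsto\bs x^*$ \emph{fixes} the linear form $x_1-x_2$ rather than negating it, and in either case the vanishing of $(1+B+B^2)$ applied to the shift vector requires that this vector lie in the eigenspaces of $B$ for the primitive cube roots of unity, which must be checked explicitly.
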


\begin{proof}
The proof is a straightforward calculation, based on the inversion formula~\eqref{inv}, the factorization
$$
\Fc = i e^{-\pi i (p_1^2+p_2^2)} e^{-\pi i (x_1^2+x_2^2)} e^{-\pi i (p_1^2+p_2^2)},
$$
and the relation
$$
(e^{2\pi i p_1p_2}e^{2\pi i x_1x_2}e^{2\pi i p_1p_2}f)(\bs x) = (\Fc f)(\bs x^*).
$$
\end{proof}
%
%



\subsection{Cluster realization of spherical double affine Hecke algebra.} 
We now algebraically interpret the computations from the previous section as defining an $SL(2;\mathbb{Z})$-equivariant injective homomorphism from the spherical subalgebra $\SH_{q,t}$ of the $\gl_2$ double affine Hecke algebra $\Hbb_{q,t}$ into the universally Laurent algebra $\Lbb^q_{\bs Q}$. The $\gl_n$ version of this homomorphism will be given in a forthcoming publication.

\begin{theorem}
There is an $SL(2,\Z)$-equivariant injective homomorphism
$$
\iota \colon \SH_{q,t} \hookrightarrow \Lbb^q_{\bs Q},
$$
defined by
$$
\iota\hr{E_{(\pm1,0)}} = L_{(\pm1,0)}
\qquad\text{and}\qquad
\iota\hr{E_{(0,\pm1)}} = L_{(0,\pm1)}.
$$
\end{theorem}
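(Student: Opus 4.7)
The plan is to verify that the given assignment extends to a well-defined algebra homomorphism by checking the defining relations of $\SH_{q,t}$ on the listed generators, derive $SL(2,\Z)$-equivariance essentially by construction, and conclude injectivity from the faithfulness of Cherednik's polynomial representation pulled back through the Whittaker transform.

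First, I would appeal to the $GL_2$ version of the presentation of $\SH_{q,t}$ from~\cite{SV11, BS12}: this algebra is generated by the four elements $E_{(\pm 1,0)}, E_{(0,\pm 1)}$, with defining relations consisting of (i) the determinant identity stating that each product $E_{(a,c)}E_{(-a,-c)}$ is the central element governed by the $t$-parameter, (ii) the identification $E_{(-a,-c)}=\bs\Delta_{(a,c)}^{-1}E_{(a,c)}$ relating negative-slope to positive-slope generators, and (iii) the commutator relation of elliptic Hall type producing $E_{(a+b,c+d)}$ from $[E_{(a,c)}, E_{(b,d)}]$ when $ad-bc=\pm 1$. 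The relations collected at the end of Section~\ref{sec:cluster-torus} for the cluster elements, namely
$$\Delta_{(b,d)}\Delta_{(-b,-d)}=1, \qquad L_{(-b,-d)} = \Delta_{(b,d)}^{-1}L_{(b,d)}, \qquad [L_{(a,c)},L_{(b,d)}] = (q^{-1}-q)L_{(a+c,b+d)},$$
match these DAHA relations term-by-term once $t$ is identified with the image of the central element $z=-e_1-e_2-e_3$ under $\Xc_\tau$, i.e.\ with $e^{-4\pi b\tau}$. This shows that $\iota$ is a well-defined algebra homomorphism.

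Second, $SL(2,\Z)$-equivariance is automatic: on the DAHA side the action is by $g\cdot E_v = E_{gv}$, while on the cluster side the construction in~\eqref{Sac} defines $g\cdot L_v = L_{gv}$. Since both actions are by algebra automorphisms and agree on the four generators by definition, they agree on the whole image of $\iota$. Third, for injectivity I would compose $\iota$ with the cluster representation of $\Lbb^q_{\bs Q}$ on $\Hc_{Q_\tau}\simeq L^2(\R^2, d\bs x)$ built in Section~\ref{sec:cluster-torus}, then transport the result through the Whittaker transform $\Wc$ to $L^2_{\mathrm{sym}}(\R^2, m(\bs\la)d\bs\la)$. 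The intertwiner~\eqref{eq:macdo-under-whit} identifies the conjugated actions of $L_{(0,\pm 1)}$ with the Macdonald operators $\Mtau_1$ and $\Mtau_1 (\Mtau_2)^{-1}$, while the eigenvalue formula~\eqref{eigen} together with~\eqref{eq:L-action} identifies the conjugated actions of $L_{(\pm 1,0)}$ with multiplication by the elementary symmetric polynomials $e_1(\bs\la)$ and $e_1(\bs\la) e_2(\bs\la)^{-1}$. This is exactly Cherednik's polynomial representation of $\SH_{q,t}$, which is faithful, so $\iota$ is injective.

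The main obstacle is verifying the commutator relation $[L_{(a,c)}, L_{(b,d)}] = (q^{-1}-q)L_{(a+c,b+d)}$ with the precise normalization demanded by the elliptic Hall presentation. Concretely, one must compute it directly in the quantum torus for a base pair, say $(a,c,b,d) = (1,0,0,1)$, using the explicit Laurent expressions for $L_{(1,0)}$ and $L_{(0,1)}$ from Section~\ref{sec:cluster-torus}, check that no additional central correction arises, and only then invoke the $SL(2,\Z)$-equivariance of the construction to bootstrap to arbitrary pairs with $ad-bc=\pm 1$. Keeping track of the powers of $q$ and of the central character—in particular that the relation $Y_1 Y_2 Y_3 = q^2 e^{-4\pi b\tau}$ encodes precisely the DAHA parameter $t$—is where the careful bookkeeping is needed.
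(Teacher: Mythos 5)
Your injectivity and equivariance arguments match the paper's: the paper also proves the theorem by observing that the inverse Whittaker transform intertwines Cherednik's representation of $E_{(a,b)}$ with the cluster representation of $L_{(a,b)}$ for the four generators, and then invokes faithfulness. The genuine gap is in your first step, where well-definedness of $\iota$ is supposed to follow from checking relations against a ``presentation'' of $\SH_{q,t}$ by the four generators $E_{(\pm1,0)},E_{(0,\pm1)}$ subject only to the determinant identities, the relation $E_{(-a,-c)}=\Delta_{(a,c)}^{-1}E_{(a,c)}$, and the commutators $[E_{(a,c)},E_{(b,d)}]=(q^{-1}-q)E_{(a+b,c+d)}$. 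The paper cites~\cite{SV11} and~\cite{BS12} only for the statement that these elements \emph{generate} $\SH_{q,t}$, not that these relations \emph{present} it. In fact they almost certainly do not: for the $GL_2$ spherical DAHA (equivalently, the torus skein-type algebra) the commutator relations among $L_{(1,0)}, L_{(0,1)}, L_{(1,1)}$ must be supplemented by at least a cubic ``quantum Markov'' relation, and a priori by further relations, before the quotient collapses to an algebra of the right size. Verifying only the three families you list would leave $\iota$ defined on a strictly larger algebra surjecting onto $\SH_{q,t}$, which does not yield a homomorphism \emph{from} $\SH_{q,t}$.

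The fix is already implicit in your own third step, and it is what the paper does: rather than checking relations, compose the (unknown) would-be map with the cluster representation on $\mathcal{H}_{Q_\tau}\simeq L^2(\R^2,d\bs x)$ and conjugate by $\Wc$. By~\eqref{eigen},~\eqref{eq:macdo-under-whit} and~\eqref{eq:L-action}, the resulting operators attached to $L_{(\pm1,0)},L_{(0,\pm1)}$ are exactly the operators representing $E_{(\pm1,0)},E_{(0,\pm1)}$ in Cherednik's faithful representation. Hence \emph{every} relation holding among the $E$'s in $\SH_{q,t}$ holds among the images of the $L$'s in the cluster representation, and faithfulness of the cluster representation of $\Lbb^q_{\bs Q}$ lifts this to a relation among the $L$'s themselves. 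This gives well-definedness and injectivity simultaneously, with no need to know a presentation of $\SH_{q,t}$ at all; your equivariance argument then goes through unchanged.
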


\begin{proof}
Comparing formulas~\eqref{eigen} and~\eqref{eq:macdo-under-whit} with the ones ~\eqref{eq:L-action} derived in the previous section shows that for each $(a,b)\in\{(\pm 1,0),(0,\pm1)\}$, the inverse Whittaker transform 
$$
\Wc^* \colon L^2_{\mathrm{sym}}(\R^2, m(\bs\la)d\bs\la) \longrightarrow L^2(\R^2, d\bs x)
$$ intertwines the action of the spherical DAHA generator $E_{(a,b)}$ in the Cherednik representation with that of the quantum cluster algebra element $L_{(a,b)}$ in the representation $\mathcal{S}_\tau\subset\mathcal{H}_\tau$ of the quantum cluster variety. Hence Theorem~\ref{thm:Whittaker}, together with the faithfulness of the two representations at hand, guarantees that $\iota$ is an injective homomorphism. The $SL(2,\Z)$-equivariance is manifest, since for any $g \in SL(2,\mathbb{Z})$ and any primitive vector $v \in \Z^{ 2}$ we have $L_{gv} = gL_v$ and $E_{gv} = gE_v$.
\end{proof}
We therefore see that in the context of quantum Teichm\"uller theory for the punctured torus as formulated in Section~\ref{sec:cluster-torus}, the Whittaker functions $\Psi_{\bs\la}(\bs x)$ are interpreted as eigenfunctions for the quantized traces of monodromy around the $(1,0)$-curve. Equation ~\eqref{10to01} is interpreted as the intertwining relation
\beq
\label{Toda-Macdo}
\Mtau_j \Wc I_S = \Wc I_S H_j.
\eeq 
between the Toda and Macdonald operators under the unitary equivalence 
$$
\Wc \circ I_S \colon L^2(\R^2, d\bs x) \longrightarrow L^2_{\mathrm{sym}}(\R^2, m(\bs\la)d\bs\la).
$$
In particular, the distributions $I_S \Psi_{\bs\la}(\bs x)$ provide a complete set of eigenfunctions for the quantized traces of monodromy around the $(0,1)$-curve.

\begin{remark}
\label{rmk:dehn}
Comparing formulas~\eqref{eq:qsys} and~\eqref{eq:dehn-actions}, we see that the $Q$-system evolution operator is identified with the action $I_{\sigma_+}$ of the $(1,0)$-cycle Dehn twist, justifying our notation for the former.
\end{remark}

\section{Ruijsenaars wavefunctions as matrix coefficients.}
\label{sec:wave}

In this section we show how the eigenfunctions of Macdonald operators, the Halln\"as--Ruijsenaars functions, can be derived from those of Toda Hamiltonians using the action of $SL(2;\mathbb{Z})$ on the Hilbert space assigned by quantum Teichm\"uller theory to the punctured torus. More specifically, we present a formula for the Halln\"as--Ruijsenaars eigenfunctions as matrix coefficients between Toda eigenfunctions for the element of $S$ of $SL(2,\mathbb{Z})$ considered in the previous section. 

Recall from the previous section the eigenfunctions $I_S\cdot \Psi_{\bs\la}(\bs x)$ for the quantized traces of monodromy around the $(0,1)$-curve. We first compute these eigenfunctions more explicitly. 

\begin{prop}
\label{prop:S-action-on-Psi}
The action of the unitary transformations $I_S$ and $I_S^2$ on the distribution $\Psi_{\bs\la}(\bs x)$ read
\begin{align*}
\label{eq:S-action-on-Psi}
I_S \Psi_{\bs\la}(\bs x) &= \zeta \zeta_s e^{-\pi i (\la_1^2+\la_2^2)} e^{\pi i c_b(2\tau-\underline{\bs\la})} e^{2\pi i (x_1+c_b)x_2} \delta(\underline{\bs\la}-\underline{\bs x}) \\
&\phantom{{}={}}\varphi(x_1-\mu_2-\tau+c_b) \varphi(\la_2-x_2-\tau+c_b), \\
I_S^2 \Psi_{\bs\la}(\bs x)  &= \Psi_{\bs\la^*}(\bs x),
\end{align*}
where we recall from Appendix~\ref{sec:qdl} the phase constant $\zeta = e^{\pi i(1-4c_b^2)/12}$.
\end{prop}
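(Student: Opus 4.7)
The starting observation is that $\Psi_{\bs\la}$ is a joint eigendistribution of the $Q$-system evolution operator $\sigma_+$ with eigenvalue $\gamma = e^{\pi i(\la_1^2+\la_2^2)}$, by~\eqref{eigen}. Comparing~\eqref{eq:qsys} with~\eqref{eq:dehn-action} via the conjugation identity $e^{\pi i(p_1^2+p_2^2)}\varphi(x_2-x_1)e^{-\pi i(p_1^2+p_2^2)} = \varphi(x_2-x_1+p_2-p_1)$, one recognizes $I_{\sigma_+^{-1}} = \zeta_s e^{\frac{2}{3}\pi i\tau^2}\,\sigma_+^{-1}$, so the outermost factor $I_{\sigma_+^{-1}}$ in $I_S = I_{\sigma_+^{-1}}I_{\sigma_-}I_{\sigma_+^{-1}}$ acts by the scalar $\zeta_s e^{\frac{2}{3}\pi i\tau^2}e^{-\pi i(\la_1^2+\la_2^2)}$. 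This reduces the problem to computing $I_{\sigma_+^{-1}}I_{\sigma_-}\Psi_{\bs\la}$. Repeating the inversion-formula manipulation that yielded~\eqref{spm} at the level of lifted operators, one can factor
$$
I_{\sigma_+^{-1}}I_{\sigma_-} \;=\; \kappa\cdot e^{2\pi i x_1 x_2}\,\Fc\,e^{2\pi i\tau(x_1-x_2)}
$$
for a constant $\kappa$ determined by the chosen lifts in~\eqref{eq:dehn-action} together with $\zeta_{inv}$ from the inversion formula~\eqref{inv}; careful tracking will show $\kappa = 1$.

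\textbf{The core contour-integral computation.} Substituting the Mellin--Barnes representation~\eqref{MB} into the right-hand side above, the multiplication by $e^{2\pi i\tau(x_1-x_2)}$ is absorbed by the shift $\alpha\mapsto\alpha-\tau$ of the integration variable, contributing an overall factor $e^{2\pi i\tau c_b}$. After the shift, the $\bs x$-dependence of the integrand is precisely $\exp\bigl(2\pi i \beta x_1 + 2\pi i(\underline{\bs\la}-\beta)x_2\bigr)$ times $\beta$-dependent scalars, so the two-dimensional Fourier transform $\Fc$ converts it to $\delta(x_1-\beta)\,\delta(x_2-\underline{\bs\la}+\beta)$. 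The first delta collapses the $\beta$-integral at $\beta = x_1$, while the second persists as $\delta(\underline{\bs\la}-\underline{\bs x})$. The two surviving quantum dilogarithm factors $\varphi(x_1-\la_1-\tau+c_b)\varphi(x_1-\la_2-\tau+c_b)$ become, upon using the delta support $\underline{\bs\la} = \underline{\bs x}$ to rewrite $\varphi(x_1-\la_1-\tau+c_b) = \varphi(\la_2-x_2-\tau+c_b)$, exactly the pair appearing in the statement. Multiplying by $e^{2\pi i x_1 x_2}$ and combining all the tracked prefactors produces the announced expression $\zeta\zeta_s e^{\pi ic_b(2\tau-\underline{\bs\la})}e^{2\pi i(x_1+c_b)x_2}$ modulo the delta constraint.

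\textbf{The $I_S^2$-identity and main obstacle.} For the second identity, Proposition~\ref{prop:S2} immediately gives $I_S^2\Psi_{\bs\la}(\bs x) = \Psi_{\bs\la}(-x_2,-x_1)$, reducing the claim to the reflection symmetry $\Psi_{\bs\la}(-x_2,-x_1) = \Psi_{\bs\la^*}(\bs x)$. Using the $\bs\la$-symmetry of $\Psi_{\bs\la}$, this may equivalently be written as $R\Psi_{\bs\la} = \Psi_{-\bs\la}$, where $R$ is the involution $(x_1,x_2)\mapsto(-x_2,-x_1)$. Conjugating the commuting family $\{H_1, H_2, \sigma_+\}$ by $R$ and using the cluster identity $L_{(-1,0)} = \Delta_{(1,0)}^{-1}L_{(1,0)}$ one checks that $R\Psi_{\bs\la}$ and $\Psi_{-\bs\la}$ are joint eigendistributions of this family with matching eigenvalues; uniqueness of the Whittaker joint eigendistribution (Theorem~\ref{thm:Whittaker}) then forces proportionality, with the proportionality constant pinned to unity by the relation $I_S^4 = \mathrm{id}$ from Proposition~\ref{prop:S2} together with a direct check of asymptotics of the Mellin--Barnes integral. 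The principal technical challenge throughout is accurately propagating the constants $\zeta$, $\zeta_s$, $\zeta_{inv}$ and the $e^{\pm\frac{2}{3}\pi i\tau^2}, e^{-\frac{4}{3}\pi i\tau^2}$ prefactors from~\eqref{eq:dehn-action} through the reduction in order to verify that $\kappa = 1$; this bookkeeping is elementary but easy to misbalance.
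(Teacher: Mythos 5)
Your overall strategy is sound and, for the first identity, runs parallel to the paper's: both arguments begin by using the $\sigma_+$-eigenvalue~\eqref{eq:gamma-eigen} to strip off the rightmost factor of $I_S=I_{\sigma_+^{-1}}I_{\sigma_-}I_{\sigma_+^{-1}}$ as a scalar, and then evaluate the remaining Fourier-type integral. You differ in the middle step: the paper substitutes the Gauss--Givental representation~\eqref{GG} and needs a change of variables together with the quantum dilogarithm Fourier transform~\eqref{Fourier-1} to produce the delta function, whereas you substitute the Mellin--Barnes representation~\eqref{MB}, for which the $\bs x$-dependence is already exponential and $\Fc$ yields the two delta functions immediately; your route is arguably cleaner there. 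For the identity $I_S^2\Psi_{\bs\la}=\Psi_{\bs\la^*}$ the paper simply compares the Gauss--Givental presentation of $\Psi_{\bs\la^*}(\bs x)$ with the Mellin--Barnes presentation of $\Psi_{\bs\la}(\bs x^*)$ and observes they coincide; your uniqueness-of-joint-eigendistributions argument also works (multiplicity one follows from the unitarity of $\Wc$ in Theorem~\ref{thm:Whittaker}), but it leaves a proportionality constant that $I_S^4=\mathrm{id}$ alone only pins down up to sign, so you still need the asymptotic comparison of integral representations you mention --- at which point the paper's direct comparison is the shorter path.

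The one concrete error is the value of $\kappa$. From the lifts~\eqref{eq:dehn-action} one has
$$
I_{\sigma_+^{-1}}I_{\sigma_-}=\bigl(\zeta_s e^{\frac23\pi i\tau^2}\bigr)\bigl(\zeta_s^{-1}e^{-\frac43\pi i\tau^2}\bigr)\,\mathbb{I}_{\sigma_+^{-1}}\mathbb{I}_{\sigma_-}=e^{-\frac23\pi i\tau^2}\,\mathbb{I}_{\sigma_+^{-1}}\mathbb{I}_{\sigma_-},
$$
and it is the projective product $\mathbb{I}_{\sigma_+^{-1}}\mathbb{I}_{\sigma_-}$, not the lifted one, that equals $e^{2\pi ix_1x_2}\Fc e^{2\pi i\tau(x_1-x_2)}$ as in~\eqref{spm}. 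Hence $\kappa=e^{-\frac23\pi i\tau^2}$, which exactly cancels the $e^{\frac23\pi i\tau^2}$ you extracted from the rightmost $I_{\sigma_+^{-1}}$, leaving the total prefactor $\zeta_s e^{-\pi i(\la_1^2+\la_2^2)}$ that the paper works with. Carrying your computation through with $\kappa=1$ lands on the stated right-hand side multiplied by a spurious $e^{\frac23\pi i\tau^2}$. This is precisely the bookkeeping you flagged as delicate; the rest of your contour computation, including the use of the delta support to rewrite $\varphi(x_1-\la_1-\tau+c_b)$ as $\varphi(\la_2-x_2-\tau+c_b)$, is consistent with the stated formula.
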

\begin{proof}
First we observe that since $\Psi_{\bs\mu}(\bs x) $ is an eigenfunction of the Dehn twist $\sigma_+$ with eigenvalue given by~\eqref{eq:gamma-eigen}, we have
$$
I_S \Psi_{\bs\mu}(\bs x) = \zeta_s e^{-\pi i (\mu_1^2+\mu_2^2)} I_{\sigma} \Psi_{\bs\mu}(\bs x).
$$
So it remains to calculate $I_\sigma \Psi_{\bs\mu}$, which is given by the integral
$$
I_\sigma \Psi_{\bs\mu}(\bs x) = e^{2\pi ix_1x_2} \int  e^{-2 \pi i(x_1y_1+x_2y_2)} e^{2\pi i \tau(y_1-y_2)}
\Psi_{\bs\mu}(\bs y) d\bs y.
$$
Using the Gauss--Givental formula~\eqref{GG} and substituting $y_1 = s-z_1$, $y_2 = z_2+s-c_b$, we arrive at
$$
\zeta^{-1} e^{-\pi i c_b \underline{\bs\mu}} e^{2\pi i c_b(x_2+\tau)} \int \frac{e^{2\pi i z_1(x_1-\mu_2-\tau)} e^{2\pi i z_2(\mu_2-x_2-\tau)} e^{2\pi i s(\underline{\bs\mu}-\underline{\bs x})}} {\varphi(z_1-c_b)\varphi(z_2-c_b)} ds d\bs z.
$$
To finish the proof of the first formula, we take the three integrals in the above expression by applying the standard formula for the delta-function
$$
\int e^{2 \pi i s x} ds = \delta(x)
$$
along with the quantum dilogarithm Fourier transform~\eqref{Fourier-1}.

In view of the Proposition~\ref{prop:S2}, the second formula for the action of $I_S^2$ follows from comparing the Gauss--Givental presentation~\eqref{GG} for $\Psi_{\bs\la^*}(\bs x)$ with the Mellin--Barnes presentation~\eqref{MB} for $\Psi_{\bs\la}(\bs x^*)$.
\end{proof}

We denote by $\Phi_{\bs\mu}^\tau(\bs\la)$ the following matrix coefficient
\beq
\label{eq:mc}
\Phi_{\bs\mu}^\tau(\bs\la) = \ha{\Psi_{\bs\la}, I_S \Psi_{\bs\mu}}
\eeq
of the element $S \in SL(2,\Z)$.

\begin{theorem}
\label{theorem2}
Let $\Mtau_j$ be the $j$-th $\gl_2$-Macdonald operator.
\begin{enumerate}
\item[(1)] The Macdonald operators $\Mtau_1$, $\Mtau_2$ are essentially self-adjoint with respect to $\ha{\cdot,\cdot}_S$.
\item[(2)] The function $\Phi_{\bs\mu}^\tau(\bs\la)$ is an eigenfunction of Macdonald operators: for $j=1,2$ one has
$$
\Mtau_j \Phi_{\bs\mu}^\tau(\bs\la) = e_j(\bs\mu) \Phi_{\bs\mu}^\tau(\bs\la).
$$
\item [(3)]For any $\tau\in\R$, the integral transform
$$
\Mc \colon f \longmapsto \ha{\Phi_{\bs\mu}^\tau,f}
$$
is a unitary automorphism of $L^2_{\mathrm{sym}}(\R^n, m(\bs\la)d\bs\la)$.
\end{enumerate}
\end{theorem}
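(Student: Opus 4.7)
My plan is to reduce all three statements to the unitary intertwining already established earlier in the excerpt. The central input is the Toda--Macdonald relation~\eqref{Toda-Macdo},
\[
\Mtau_j \circ \Wc \circ I_S = \Wc \circ I_S \circ H_j,
\]
together with the fact that $\Wc \colon L^2(\R^2, d\bs x)\to L^2_{\mathrm{sym}}(\R^2, m(\bs\la) d\bs\la)$ and $I_S$ are both unitary (Theorem~\ref{thm:Whittaker} and Proposition~\ref{prop:S2}). Rewriting this relation as $\Mtau_j = (\Wc I_S)\, H_j\, (\Wc I_S)^{-1}$ exhibits $\Mtau_j$ as the unitary transport of the open $q$-Toda Hamiltonian $H_j$ from $L^2(\R^2, d\bs x)$ to $L^2_{\mathrm{sym}}$, so part~(1) will follow at once from essential self-adjointness of $H_j$; the latter is in turn immediate because $\Wc \circ H_j = e_j(\bs\la)\circ \Wc$ already identifies $H_j$, via the unitary $\Wc$, with the essentially self-adjoint multiplication operator $e_j(\bs\la)$.

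For part~(2), the idea is to recognise the matrix coefficient~\eqref{eq:mc} as an instance of the Whittaker transform:
\[
\Phi_{\bs\mu}^\tau(\bs\la) \;=\; \ha{\Psi_{\bs\la}, I_S\Psi_{\bs\mu}} \;=\; \bigl(\Wc\,(I_S \Psi_{\bs\mu})\bigr)(\bs\la).
\]
Acting by $\Mtau_j$ in the variable $\bs\la$, the intertwining~\eqref{Toda-Macdo} moves $\Mtau_j$ past $\Wc I_S$ to become $H_j$ acting on $\Psi_{\bs\mu}$; using the Toda eigenequation~\eqref{eigen} this yields
\[
\Mtau_j \Phi_{\bs\mu}^\tau \;=\; \Wc I_S H_j \Psi_{\bs\mu} \;=\; e_j(\bs\mu)\, \Wc I_S \Psi_{\bs\mu} \;=\; e_j(\bs\mu)\, \Phi_{\bs\mu}^\tau.
\]

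For part~(3), my approach is to factor the transform $\Mc$ explicitly into a product of unitaries. Writing the adjoint Whittaker transform as $\Wc^* g(\bs x) = \int \Psi_{\bs\la}(\bs x)\, g(\bs\la)\, m(\bs\la)\, d\bs\la$, substitution into the definition of $\Mc$ together with Fubini gives
\[
\Mc(f)(\bs\mu) \;=\; \ha{I_S\Psi_{\bs\mu},\,\Wc^* f}_{L^2(\R^2,d\bs x)} \;=\; \ha{\Psi_{\bs\mu},\, I_S^{-1}\Wc^* f} \;=\; \bigl(\Wc I_S^{-1}\Wc^* f\bigr)(\bs\mu),
\]
where the middle equality uses unitarity of $I_S$. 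Thus $\Mc = \Wc \circ I_S^{-1}\circ \Wc^*$, which is a composition of three unitary maps and is therefore itself a unitary automorphism of $L^2_{\mathrm{sym}}(\R^2, m(\bs\la)d\bs\la)$.

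The principal delicate point, which I expect will require the most care in converting this sketch into a rigorous proof, is that the Whittaker functions $\Psi_{\bs\la}$ are tempered distributions rather than $L^2$-vectors. Consequently the pairing~\eqref{eq:mc} and the Fubini-type manipulations above must be interpreted through the joint spectral decomposition of the commuting Toda operators, with $\Phi_{\bs\mu}^\tau(\bs\la)$ viewed as the Schwartz kernel of the unitary operator $\Wc I_S \Wc^*$ in the continuous $\Psi_{\bs\la}$-basis. On the Fock--Goncharov Schwartz space this kernel description is well-defined, and the formal identities above then encode rigorous distributional statements; the eigenequation in~(2) holds in this distributional sense, while~(1) and~(3) are genuine operator-theoretic statements on the Hilbert space.
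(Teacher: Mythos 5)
Your proposal is correct and follows essentially the same route as the paper: the paper's proof likewise derives parts (1) and (3) from the intertwining relation~\eqref{Toda-Macdo}, the unitarity of $I_S$ and of the Whittaker transform (Theorem~\ref{thm:Whittaker}), and the essential self-adjointness of the Toda Hamiltonians, and derives part (2) from~\eqref{eigen} together with~\eqref{Toda-Macdo}. Your explicit factorization $\Mc = \Wc \circ I_S^{-1} \circ \Wc^*$ and your remarks on the distributional interpretation of $\Psi_{\bs\la}$ simply spell out details the paper leaves implicit.
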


\begin{proof}
Points~1 and~3 follow from the equality~\eqref{Toda-Macdo}, the unitarity of $I_S$, Theorem~\ref{thm:Whittaker}, and the fact that Toda Hamiltonians are essentially self-adjoint with respect to $\ha{\cdot,\cdot}$. Point~2 is a consequence of the equalities~\eqref{eigen}, \eqref{Toda-Macdo}. 
\end{proof}

\begin{cor}
\label{cor:Macdo}
The matrix coefficient $\Phi_{\bs\mu}^\tau(\bs\la)$ has the following properties:
\begin{enumerate}
\item [(1)] The function $\Phi_{\bs\mu}^\tau(\bs\la)$ is symmetric in $\bs \lambda$, $\bs\mu$, and satisfies
\beq
\label{eq:Phi-unitarity}
\overline{\Phi_{\bs\mu}^\tau(\bs\la)} = \Phi_{\bs\mu}^{-\tau}(\bs\la^*).
\eeq
\item [(2)] The function $\Phi_{\bs\mu}^\tau(\bs\la)$ satisfies the duality
$$
\Phi_{\bs\mu}^\tau(\bs\la) = \Phi_{\bs\la}^{-\tau}(\bs\mu).
$$
\end{enumerate}
\end{cor}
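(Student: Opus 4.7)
The plan breaks into three steps. First, the symmetries of $\Phi^\tau_{\bs\mu}(\bs\la)$ in $\bs\la$ and $\bs\mu$ are immediate from the Mellin--Barnes presentation~\eqref{MB}, which displays the symmetry of $\Psi_{\bs\la}(\bs x)$ in its spectral variables. Since $I_S$ does not depend on $\bs\mu$, the distribution $I_S\Psi_{\bs\mu}$ inherits the symmetry in $\bs\mu$, and both symmetries of $\ha{\Psi_{\bs\la},I_S\Psi_{\bs\mu}}$ follow by taking complex conjugates inside the inner product.

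Second, I would deduce the duality (2) from the conjugation relation (1) together with the auxiliary identity
\begin{equation*}
\Phi^\tau_{\bs\mu}(\bs\la) \,=\, \overline{\Phi^\tau_{\bs\la^*}(\bs\mu)}.
\end{equation*}
The identity has a one-line proof: since $I_S^4=\Id$ by Proposition~\ref{prop:S2} and $I_S^2$ is self-adjoint, Proposition~\ref{prop:S-action-on-Psi} gives $\Psi_{\bs\la}=I_S^2\Psi_{\bs\la^*}$; substituting and computing yields
\begin{align*}
\Phi^\tau_{\bs\mu}(\bs\la)
&= \ha{I_S^2\Psi_{\bs\la^*},\,I_S\Psi_{\bs\mu}}
 = \ha{\Psi_{\bs\la^*},\,I_S^3\Psi_{\bs\mu}}
 = \ha{I_S\Psi_{\bs\la^*},\,\Psi_{\bs\mu}}
 = \overline{\Phi^\tau_{\bs\la^*}(\bs\mu)},
\end{align*}
where the middle equalities use $(I_S^2)^*=I_S^2$ and $I_S^3=I_S^{-1}$. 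Granting the auxiliary identity and (1), the duality (2) follows from the double-conjugate chain
$$
\Phi^\tau_{\bs\mu}(\bs\la) \,=\, \overline{\Phi^{-\tau}_{\bs\mu}(\bs\la^*)} \,=\, \overline{\,\overline{\Phi^{-\tau}_{\bs\la}(\bs\mu)}\,} \,=\, \Phi^{-\tau}_{\bs\la}(\bs\mu),
$$
whose first equality conjugates~(1) and whose second applies the auxiliary identity at parameter $-\tau$ with $\bs\la$ replaced by $\bs\la^*$, using $(\bs\la^*)^*=\bs\la$.

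The technical core of the argument is thus the direct proof of the conjugation relation~(1). My plan is to compute both sides as explicit contour integrals and match them termwise. Substituting the formula of Proposition~\ref{prop:S-action-on-Psi} for $I_S\Psi_{\bs\mu}$ eliminates the $\delta(\underline{\bs\mu}-\underline{\bs x})$ factor, reducing $\Phi^\tau_{\bs\mu}(\bs\la)$ to a single-variable integral in $x_1$ involving two quantum dilogarithms with shifted arguments $x_1-\mu_j-\tau+c_b$, multiplied by $\overline{\Psi_{\bs\la}(x_1,\underline{\bs\mu}-x_1)}$. Inserting the Mellin--Barnes integral~\eqref{MB} for the conjugate Whittaker function and simplifying via the unitarity relation from~\eqref{eq:qdl-unitary} together with $\bar c_b=-c_b$ gives a double integral which, after a change of contour variable $\alpha\mapsto-\alpha$, should match the analogous expression for $\Phi^{-\tau}_{\bs\mu}(\bs\la^*)$ produced by applying Proposition~\ref{prop:S-action-on-Psi} directly. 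The hard part is carefully tracking the phase prefactors $\zeta$, $\zeta_s$, $e^{-\pi i(\mu_1^2+\mu_2^2)}$, and $e^{\pi ic_b(2\tau-\underline{\bs\mu})}$ through the complex conjugation: the sign flip $\tau\mapsto-\tau$ must emerge from the arguments of the quantum dilogarithms, the reflection $\bs\la\mapsto\bs\la^*$ from the sign change of the Mellin--Barnes exponent, and the residual phases should cancel thanks to $c_b\in i\R$.
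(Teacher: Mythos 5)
Your proposal is correct and follows essentially the same route as the paper: the symmetry in $\bs\la,\bs\mu$ is inherited from the Whittaker functions, the conjugation relation~(1) is checked directly via the quantum dilogarithm's unitarity (the paper likewise treats this as a short computation), and the duality~(2) is obtained by moving $I_S^2$ across the inner product using Propositions~\ref{prop:S2} and~\ref{prop:S-action-on-Psi} together with~(1). Your only cosmetic departure is applying $I_S^2\Psi_{\bs\la^*}=\Psi_{\bs\la}$ in the left slot to reach the auxiliary identity $\Phi^\tau_{\bs\mu}(\bs\la)=\overline{\Phi^\tau_{\bs\la^*}(\bs\mu)}$, whereas the paper applies $I_S^2\Psi_{\bs\mu}=\Psi_{\bs\mu^*}$ in the right slot and lands on $\Phi^\tau_{\bs\mu}(\bs\la)=\overline{\Phi^\tau_{\bs\la}(\bs\mu^*)}$; these are mirror images of the same argument.
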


\begin{proof}
By Point 3 of the Theorem~\ref{theorem2}, the symmetry in $\bs\la$ and $\bs\mu$ is inherited from that of the Whittaker functions, and the equality~\eqref{eq:Phi-unitarity} is easily checked using the unitarity property~\eqref{inv} of the non-compact quantum dilogarithm. For Point 2, we use the symmetries in Point 1, together with the unitarity of the $S$-transformation $I_S$ and the second part of Proposition~\ref{prop:S-action-on-Psi}:
$$
\Phi_{\bs\mu}^\tau(\bs\la)
= \langle \Psi_{\bs\la},I_S \Psi_{\bs\mu} \rangle
= \langle I_S\Psi_{\bs\la}, I_S^2\Psi_{\bs\mu} \rangle
= \langle I_S\Psi_{\bs\la}, \Psi_{\bs\mu^*} \rangle
= \overline{ \langle\Psi_{\bs\mu^*}, I_S\Psi_{\bs\la} \rangle}
= \overline{ \Phi_{\bs\la}^\tau(\bs\mu^*)}
= \Phi_{\bs\la}^{-\tau}(\bs\mu).
$$
\end{proof}

The following Proposition gives an explicit calculation of the matrix element~\ref{eq:mc}. An equivalent formula has been obtained by Teschner and Vartanov in Section 6.5.4 of~\cite{TV15}.
\begin{prop}
\label{prop:RH}
The Halln\"as--Ruijsenaars eigenfunction $\Rho_{\bs\mu}^\tau(\bs\la)$ can be expressed as
\beq
\label{Macdo-mc}
\Rho_{\bs\mu}^\tau(\bs\la) = \zeta \zeta_s^{-1} e^{2\pi i\tau^2} \Phi_{\bs\mu}^\tau(\bs\la).
\eeq
\end{prop}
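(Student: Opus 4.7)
\medskip

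\noindent\textbf{Proof plan for Proposition~\ref{prop:RH}.}

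The plan is to compute the matrix coefficient $\Phi^{\tau}_{\bs\mu}(\bs\la) = \langle \Psi_{\bs\la}, I_S\Psi_{\bs\mu}\rangle$ directly, starting from the explicit formula for $I_S\Psi_{\bs\mu}$ established in Proposition~\ref{prop:S-action-on-Psi}, and to manipulate the resulting integral into the form~\eqref{eq:HR}. The definition of the inner product gives
$$
\Phi^{\tau}_{\bs\mu}(\bs\la) = \int_{\R^2}\overline{\Psi_{\bs\la}(\bs x)}\,(I_S\Psi_{\bs\mu})(\bs x)\,d\bs x.
$$
The factor $\delta(\underline{\bs\mu}-\underline{\bs x})$ appearing in Proposition~\ref{prop:S-action-on-Psi} immediately collapses this two-dimensional integral onto the diagonal $x_2 = \underline{\bs\mu}-x_1$. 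After this substitution the two quantum dilogarithm factors $\varphi(x_1-\mu_2-\tau+c_b)\varphi(\mu_2-x_2-\tau+c_b)$ produced by $I_S\Psi_{\bs\mu}$ become a symmetric pair $\varphi(x_1-\mu_1-\tau+c_b)\varphi(x_1-\mu_2-\tau+c_b)$, which is exactly the right ``ratio numerator'' type expression that enters~\eqref{eq:HR} (with the mnemonic $c_b-\tau=c_b/2+\tau_-$).

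The next step is to substitute a convenient integral representation for the remaining factor $\overline{\Psi_{\bs\la}(\bs x)}$. The Mellin--Barnes presentation~\eqref{MB} is best suited, since it produces a one-variable contour integral over the dual variable $\alpha$, and after conjugation via $\overline{\varphi(z)} = \varphi(-\bar z)$ from~\eqref{inv} it delivers quantum dilogarithms whose arguments $\alpha-\la_j+c_b$ rotate into $-\alpha+\la_j+c_b$. I would then carry out the $x_1$ integral, which is purely Gaussian/oscillatory once the delta function has been resolved; this produces either another delta function in the contour variable or a plane-wave kernel that shifts $\alpha$ appropriately. Either way, the remaining single contour integral over $\alpha$ (or its shifted version $x$) should match the integrand of~\eqref{eq:HR}, where the ratio $\varphi(x+\mu+\tau_-)\varphi(x-\mu+\tau_-)/\varphi(x+\mu-\tau_-)\varphi(x-\mu-\tau_-)$ will appear after one further application of the inversion identity $\varphi(z)\varphi(-z)=\zeta^{-1}e^{-\pi i z^2}$ to rearrange denominator versus numerator factors.

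All that then remains is to verify the prefactor. The computation produces the following contributions: $\zeta\zeta_s$ from Proposition~\ref{prop:S-action-on-Psi}; a conjugated $\bar\zeta$ from the Mellin--Barnes normalization of $\Psi_{\bs\la}$; various Gaussian phases in $\bs\la,\bs\mu,\tau$ produced by evaluating $e^{\pi i c_b(2\tau-\underline{\bs\mu})}e^{2\pi i(x_1+c_b)x_2}\overline{e^{\pi i\underline{\bs\la}(2x_2+c_b)}}$ on the diagonal; and additional quadratic-in-$\tau$ factors coming from the inversion identity applied to the two $\varphi$'s derived from $I_S\Psi_{\bs\mu}$. When these are consolidated and one isolates the factor $\varphi(c_b-2\tau_-)\cdot e^{-\pi i\underline{\bs\la}\underline{\bs\mu}}$ from~\eqref{eq:HR}, the remaining constant should be precisely $(\zeta\zeta_s^{-1}e^{2\pi i\tau^2})^{-1}$, yielding~\eqref{Macdo-mc}.

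I expect the main obstacle to be bookkeeping: tracing the numerous quadratic-in-$\bs\la,\bs\mu,\tau,c_b$ phases and the various $\zeta$, $\zeta_s$ factors that accumulate through the Gaussian integration, the inversion identity, and the final contour shift. Conceptually the calculation is routine, but matching precisely to the shape of~\eqref{eq:HR} with the claimed normalization constant $\zeta\zeta_s^{-1}e^{2\pi i\tau^2}$ requires some care; an independent cross-check would be to verify the Corollary~\ref{cor:Macdo} symmetries $\Phi^\tau_{\bs\mu}(\bs\la)=\Phi^\tau_{\bs\la}(\bs\mu)$ and $\overline{\Phi^\tau_{\bs\mu}(\bs\la)}=\Phi^{-\tau}_{\bs\mu}(\bs\la^*)$ against the bispectrality and conjugation symmetries of $\Rho^\tau_{\bs\mu}(\bs\la)$, both of which should match once the prefactor $\zeta\zeta_s^{-1}e^{2\pi i\tau^2}$ is chosen.
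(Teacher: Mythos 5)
Your overall strategy --- insert the explicit formula for $I_S\Psi_{\bs\mu}$ from Proposition~\ref{prop:S-action-on-Psi}, let the delta function collapse $x_2=\underline{\bs\mu}-x_1$, substitute an integral presentation for $\overline{\Psi_{\bs\la}}$, and match the surviving contour integral to~\eqref{eq:HR} --- is exactly the paper's. But the route you choose for the middle step has a genuine gap. After the delta function is resolved, the $x_1$-dependence of the integrand is \emph{not} purely Gaussian/oscillatory: the two quantum dilogarithms coming from $I_S\Psi_{\bs\mu}$, which you yourself identify as the pair $\varphi(x_1-\mu_1-\tau+c_b)\varphi(x_1-\mu_2-\tau+c_b)$, depend on $x_1$, and the factor $e^{2\pi i(x_1+c_b)x_2}$ contributes an uncancelled quadratic phase $e^{-2\pi i x_1^2}$. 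An integral of the form $\int e^{-2\pi ix_1^2+\mathrm{lin}}\,\varphi(x_1-u_1)\varphi(x_1-u_2)\,dx_1$ is not evaluated by any of the identities~\eqref{Fourier-1}--\eqref{Saal} (after inversion it is itself a Mellin--Barnes/Whittaker-type integral), so your plan stalls at precisely the step you describe as routine. Moreover, integrating out $x_1$ would destroy the $\bs\mu$-dependent dilogarithms, while the surviving $\alpha$-dependent factors from~\eqref{MB} sit purely in the denominator and depend on $\bs\la$; neither feature matches the two-over-two ratio in $\bs\mu,\tau_-$ appearing in~\eqref{eq:HR}, and ``one further application of the inversion identity'' cannot manufacture the shifted denominator factors $\varphi(x\pm\mu-\tau_-)$.

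The paper avoids both problems by using the Gauss--Givental presentation~\eqref{GG} for $\overline{\Psi_{\bs\la}}$ rather than Mellin--Barnes: conjugating~\eqref{GG} and applying~\eqref{inv} puts the two $r$-dependent dilogarithms back in the denominator at the cost of Gaussian factors $e^{\pi i(r-x_1+c_b)^2}e^{\pi i(x_2-r)^2}$, whose quadratic parts in $x_1$ exactly cancel the $e^{-2\pi ix_1^2}$ from $I_S\Psi_{\bs\mu}$. After the shift $x_1=x+r$ the inner $x$-integral has two numerator and two denominator dilogarithms with a purely linear exponential, i.e.\ it is precisely the Saalsch\"utz integral~\eqref{Saal}; its evaluation is what produces the four-dilogarithm ratio in the remaining $r$-integral, which becomes~\eqref{eq:HR} after $r=x+\tfrac12(\underline{\bs\mu}-c_b)$. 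If you want to salvage your version, you must either switch to~\eqref{GG} or identify a substitute for~\eqref{Saal} adapted to the Mellin--Barnes kernel; as written, the key identity is missing. (Minor point: you also misquote the inversion relation, which reads $\varphi(z)\varphi(-z)=\zeta_{\inv}e^{\pi iz^2}$.) Your suggested cross-check against the symmetries of Corollary~\ref{cor:Macdo} is sound but only constrains the constant, not the integral evaluation itself.
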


\begin{proof}
Applying~\eqref{eq:qdl-unitary} and~\eqref{inv} to the Gauss--Givental presentation~\eqref{GG} of the Whittaker function, we get
$$
\overline{\Psi_{\bs\la}(\bs x)} = \zeta^{-3} e^{-2\pi ic_b^2} e^{\pi i c_b(\la_2-\la_1)} e^{-2\pi i \la_2 \underline{\bs x}} \int \frac{e^{\pi i (r-x_1+c_b)^2} e^{\pi i (x_2-r)^2} e^{2\pi i r(\la_2-\la_1)} }{\varphi(x_1-r-c_b) \varphi(r-x_2)} dr.
$$
We now substitute the above expression into~\eqref{eq:mc} and apply Proposition~\ref{prop:S-action-on-Psi}. Integrating out $x_2$ and shifting the integration variable $x_1 = x+r$, we have
\begin{align*}
\ha{\Psi_{\bs\la}, I_S \Psi_{\bs\mu}}
 = \zeta_{\mathrm{inv}} \zeta_s e^{\pi i c_b(\underline{\bs\mu}+\la_2-\la_1+2\tau)} e^{2\pi i \mu_1\mu_2} e^{-2 \pi i \la_2\underline{\bs\mu}} \int e^{2\pi i r(r+\la_2-\la_1-\underline{\bs\mu}-c_b)} \\
e^{-4\pi i c_bx} \frac{\varphi(x+r-\mu_1-\tau+c_b) \varphi(x+r-\mu_2-\tau+c_b)}{\varphi(x+2r-\underline{\bs\mu}) \varphi(x-c_b)} dx dr
\end{align*}
Using the identity~\eqref{Saal} with $u_j = r-\mu_j-\tau+c_b$ and $v = 2r-\underline{\bs\mu}+c_b$ and applying the inversion formula~\eqref{inv} three more times, we derive that the integral over $x$ is equal to
$$
\zeta e^{-2\pi i(\mu_1\mu_2+\underline{\bs\mu}\tau +\tau^2)} \varphi(2\tau) e^{2\pi i r(-r+\underline{\bs\mu}+2\tau)} \frac{\varphi(r-\mu_1-\tau+c_b) \varphi(r-\mu_2-\tau+c_b)}{\varphi(r-\mu_1+\tau) \varphi(r-\mu_1-\tau)}.
$$
Finally, substituting $r = x + \frac12(\underline{\bs\mu}-c_b)$ we arrive at the desired formula~\eqref{Macdo-mc}.
\end{proof}

The following result is an immediate consequence of Corollary~\ref{cor:Macdo} and Proposition~\ref{prop:RH}.

\begin{cor}
\label{cor:cor}
The Halln\"as--Ruijsenaars eigenfunction $\Rho_{\bs\mu}^\tau(\bs\la)$ is symmetric in $\bs \lambda$, $\bs\mu$, and satisfies
\begin{align*}
\Rho_{\bs\mu}^\tau(\bs\la) = \Rho_{\bs\la}^{-\tau}(\bs\mu)
\qquad\text{and}\qquad
\overline{\Rho_{\bs\mu}^\tau(\bs\la)} = \zeta_{inv}^{-1}e^{-4\pi i \tau^2}{\Rho_{\bs\mu}^{-\tau}(\bs\la^*)}.
\end{align*}
\end{cor}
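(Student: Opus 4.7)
The proof plan is completely routine: both assertions amount to transporting the symmetries of the matrix-coefficient function $\Phi_{\bs\mu}^\tau(\bs\la)$ across the proportionality in Proposition~\ref{prop:RH}. The plan is to take the identity
\[
\Rho_{\bs\mu}^\tau(\bs\la) = \zeta\,\zeta_s^{-1}\,e^{2\pi i\tau^2}\,\Phi_{\bs\mu}^\tau(\bs\la)
\]
as a starting point, and read off each of the three claimed properties from the corresponding statement in Corollary~\ref{cor:Macdo}, tracking the prefactor carefully.

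First, the symmetry $\Rho^\tau_{\bs\mu}(\bs\la)=\Rho^\tau_{\bs\la}(\bs\mu)$ is immediate, since the prefactor is independent of $\bs\la$ and $\bs\mu$ and $\Phi^\tau_{\bs\mu}(\bs\la)$ is symmetric in these variables by part (1) of Corollary~\ref{cor:Macdo}. Similarly, the bispectral duality follows because replacing $\tau$ by $-\tau$ leaves the prefactor $\zeta\zeta_s^{-1}e^{2\pi i\tau^2}$ unchanged, and then Corollary~\ref{cor:Macdo}(2) gives $\Phi^\tau_{\bs\mu}(\bs\la)=\Phi^{-\tau}_{\bs\la}(\bs\mu)$.

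The only step that requires a brief computation is the unitarity-type relation. Taking complex conjugates on both sides of the proportionality and using the second half of part (1) of Corollary~\ref{cor:Macdo}, I would obtain
\[
\overline{\Rho_{\bs\mu}^\tau(\bs\la)} = \overline{\zeta\,\zeta_s^{-1}}\,e^{-2\pi i\tau^2}\,\Phi_{\bs\mu}^{-\tau}(\bs\la^*),
\]
while plugging Proposition~\ref{prop:RH} into the right-hand side of the target identity gives
\[
\zeta_{inv}^{-1}\,e^{-4\pi i\tau^2}\,\Rho^{-\tau}_{\bs\mu}(\bs\la^*) = \zeta_{inv}^{-1}\,\zeta\,\zeta_s^{-1}\,e^{-2\pi i\tau^2}\,\Phi^{-\tau}_{\bs\mu}(\bs\la^*).
\]
So the claim reduces to the numerical identity $\overline{\zeta\zeta_s^{-1}}=\zeta_{inv}^{-1}\zeta\zeta_s^{-1}$, or equivalently $\zeta_{inv}=\zeta^{2}\zeta_s^{-2}$, which is a direct consequence of the explicit formulas~\eqref{eq:zetas} and the definition of $\zeta$ recalled in the appendix.

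The main (and only) potential obstacle is this constant-matching computation; everything else is formal. If the constants did not work out as above, one could instead derive the conjugation identity directly from the matrix-coefficient definition~\eqref{eq:mc}, using unitarity of $I_S$ and the explicit action of $I_S^2$ from Proposition~\ref{prop:S2}, avoiding any reliance on the precise phase of $\zeta_{inv}$.
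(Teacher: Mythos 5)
Your proposal is correct and follows exactly the paper's route: the paper derives Corollary~\ref{cor:cor} as an immediate consequence of Corollary~\ref{cor:Macdo} and Proposition~\ref{prop:RH}, which is precisely your transport-the-symmetries argument. Your constant check does close: since $c_b$ is pure imaginary, $\zeta$ and $\zeta_s$ are unimodular, and one verifies $\zeta_s^2=\zeta^4e^{\pi i c_b^2}$ directly from~\eqref{eq:zetas}, which is equivalent to the required identity $\zeta_{inv}=\zeta^2\zeta_s^{-2}$.
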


\section{Macdonald polynomials and Harish-Chandra series}
\label{sec:poly}

We finish the paper by showing how the $GL(2)$ $q$-Whittaker and Macdonald polynomials can be obtained respectively as special values of the analytic continuations of Whittaker and Halln\"as--Ruijsenaars functions. We also show that by closing the contour of integration and calculating residues of these functions, we recover the Harish-Chandra series in both Whittaker and Macdonald case.

\subsection{Whittaker polynomials from Whittaker functions}

As was observed in~\cite{SS18}, the Whittaker function $\Psi_{\bs\la}(\bs x)$ becomes entire after multiplying by $\varphi(x_2-x_1)$. Let us define
\beq
\label{eq:psitilde}
\widetilde\Psi_{\bs\la}(\bs x) = -\zeta^{-1} e^{\pi i c_b (\la_2-\la_1)} e^{2\pi i (\la_1 x_1 + \la_2 x_2)} \varphi(x_2-x_1) \int_C \frac{e^{2\pi i t(\la_1-\la_2) }} {\varphi(t-c_b)\varphi(x_2-x_1-t)}dt,
\eeq
and note that $\widetilde\Psi_{\bs\la}(\bs x)$ is a joint eigenfunction of the mutated Hamiltonians
\begin{align*}
\varphi(x_2-x_1) H_1 \varphi(x_2-x_1)^{-1} &= e^{2\pi b p_1} + e^{2\pi b(p_1+x_2-x_1)} + e^{2\pi b p_2}, \\
\varphi(x_2-x_1) H_2 \varphi(x_2-x_1)^{-1} &= e^{2\pi b(p_1+p_2)}.
\end{align*}

For the remainder of this section we let $\bs n, \bs{\tilde n} \in \Z^2$ be a pair of generalized partitions, each with 2 parts. That is
\beq
\label{not:n}
\begin{aligned}
\bs n &= (n_1,n_2), \qquad n_1 \le n_2,\\
\bs{\tilde n} &= (\tilde n_1, \tilde n_2), \qquad \tilde n_1 \le \tilde n_2.
\end{aligned}
\eeq
We also fix a notation
\beq
\label{not:crs}
c^\pm_{r,s} = \pm\frac{1}{2}c_b -irb-isb^{-1}.
\eeq


\begin{defn}
The $\gl_2$ Whittaker polynomial $W_{\bs n}(\bs z;q)$ is defined by
\beq
\label{eq:qGLWhit}
W_{\bs n}(\bs z; q) = \sum_{k=0}^{n_2-n_1}\binom{n_2-n_1}{k}_q z_1^{n_1+k}z_2^{n_2-k},
\eeq
with the $q$-binomial coefficient given by~\eqref{q-binom}.
\end{defn}

\begin{theorem}
\label{thm:W-pol}
The value of the function $\widetilde\Psi_{\bs\la}(\bs x)$ at the point $\bs x = (c^+_{n_1,\tilde n_1},c^-_{n_2,\tilde n_2})$ is given by
\beq
\label{eq:whittaker-analytic-continuation}
\widetilde\Psi_{\bs\la}(c^+_{n_1,\tilde n_1},c^-_{n_2,\tilde n_2}) = W_{\bs n}(e^{2\pi b \la_1},e^{2\pi b \la_2};q^{-2}) W_{\bs {\tilde n}}(e^{2\pi b^{-1} \la_1},e^{2\pi b^{-1} \la_2};\tilde q^{-2}).
\eeq
\end{theorem}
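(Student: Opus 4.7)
The plan is to evaluate the integral in~\eqref{eq:psitilde} at $\bs x=(c^+_{n_1,\tilde n_1},c^-_{n_2,\tilde n_2})$ by closing the contour in the lower half-plane and extracting residues. Set $N=n_2-n_1$ and $\tilde N=\tilde n_2-\tilde n_1$, so that $x_2-x_1=-c_b-iNb-i\tilde Nb^{-1}$, which is a simple zero of the prefactor $\varphi(x_2-x_1)$. The integrand carries two descending arrays of poles in the lower half-plane: poles of $\varphi(t-c_b)^{-1}$ at $t=-imb-inb^{-1}$ with $m,n\ge 0$, and poles of $\varphi(x_2-x_1-t)^{-1}$ at $t=-i(N-m)b-i(\tilde N-n)b^{-1}$ with $m,n\ge 0$. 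At the special value these arrays collide precisely on the rectangular grid $0\le m\le N$, $0\le n\le \tilde N$, where simple poles merge into double poles; $\widetilde\Psi_{\bs\la}$ at the special point is thus schematically $0\cdot\infty$.

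To regularize this I would deform along $x_2-x_1=-c_b-iNb-i\tilde Nb^{-1}+\eps$ and take $\eps\to 0$, invoking the entireness of $\widetilde\Psi_{\bs\la}(\bs x)$ in $\bs x$ established in~\cite{SS18}. For $\eps\ne 0$ the two pole arrays separate and the contour can be closed downward (in the direction of decay dictated by $\sgn(\la_1-\la_2)$), giving a sum of simple-pole residues. Applying the product formula for the non-compact quantum dilogarithm along with its functional equations turns these residues into ratios of $q^{-2}$- and $\tilde q^2$-Pochhammer symbols. Residues at poles outside the collision grid remain finite and are annihilated in the limit by the vanishing prefactor $\varphi(-c_b-iNb-i\tilde Nb^{-1}+\eps)=O(\eps)$. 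Residues at the grid points $t_{m,n}=-i(N-m)b-i(\tilde N-n)b^{-1}$, being sums of two merging simple-pole contributions, diverge as $O(1/\eps)$, so their product with the vanishing prefactor yields a finite limit.

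The heart of the computation is then the explicit $\eps\to 0$ contribution from each grid point. Expressing $\varphi(-c_b-iNb-i\tilde Nb^{-1}+\eps)$ by repeated application of the functional equation as $\eps$ times a known product of $q^{-2}$- and $\tilde q^2$-Pochhammers (the $N+\tilde N$ shifts tracing this value back to $\varphi(-c_b+\eps)$, whose vanishing is linear), and similarly evaluating the local residue at $t_{m,n}$, one finds that the $(m,n)$-contribution factorizes into a $b$-piece and a $b^{-1}$-piece. The $b$-piece produces $\binom{N}{m}_{q^{-2}} e^{2\pi b(\la_1(n_1+m)+\la_2(n_2-m))}$ and the $b^{-1}$-piece is its modular dual. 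Combined with the $e^{2\pi i(\la_1 x_1+\la_2 x_2)}$ prefactor of~\eqref{eq:psitilde}, which at the special point separates into $b$- and $b^{-1}$-monomials, summation over $m$ yields $W_{\bs n}(e^{2\pi b\la_1},e^{2\pi b\la_2};q^{-2})$ and summation over $n$ yields the modular-dual factor.

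The main obstacle is this last, bookkeeping-heavy step: the double-pole cancellation and the extraction of the $q$-binomial from Pochhammer ratios. A cleaner, more conceptual alternative --- which would reduce the direct calculation to verification of the base case $N=\tilde N=0$ --- is a bispectrality/uniqueness argument. Indeed, the LHS of~\eqref{eq:whittaker-analytic-continuation} satisfies finite-difference equations in $(\bs n,\bs{\tilde n})$ inherited from the mutated Toda Hamiltonians $\varphi(x_2-x_1)H_j\varphi(x_2-x_1)^{-1}$ acting on $\widetilde\Psi_{\bs\la}$, since $e^{2\pi bp_j}$ shifts $n_j\mapsto n_j+1$ and $e^{2\pi b^{-1}p_j}$ shifts $\tilde n_j\mapsto\tilde n_j+1$ at our lattice points, while the RHS satisfies the same recursions by the classical $q$-Toda difference equations for Whittaker polynomials. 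Together with a polynomial degree bound in $e^{2\pi b\la_j}, e^{2\pi b^{-1}\la_j}$ and matching at the trivial base case, uniqueness of the relevant polynomial eigenfunction forces the desired equality.
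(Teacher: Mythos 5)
Your main residue argument follows essentially the same route as the paper's proof: deform $\bs x$ off the lattice point by $\eps$, identify the $(n+1)(\tilde n+1)$ colliding poles, and extract those residues while the vanishing prefactor $\varphi(x_2-x_1)$ kills the rest. One caveat on execution: the paper does \emph{not} close the contour. It only pushes $C$ across the finitely many colliding poles $t^+_{m,\tilde m}$ with $0\le m\le n$, $0\le\tilde m\le\tilde n$ to produce a decomposition $\widetilde\Psi = I+2\pi i\sum\Res$, and then simply notes $\lim_{\eps\to 0}I=0$ because the deformed contour $I$ stays bounded away from the pinch points while the prefactor vanishes. Your version closes the contour downward and picks up the entire infinite residue series; to conclude, you then need to interchange the infinite sum with the $\eps\to 0$ limit, arguing termwise that off-grid residues vanish --- this requires a dominated-convergence style bound which you do not supply, and which the paper's finite-contour-push sidesteps entirely. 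Also be a bit careful describing the pole geometry: the zeros of $\varphi(x_2-x_1-t)$ form an \emph{upward}-running array from $t = -iNb - i\tilde N b^{-1}$, so only a finite portion of it sits in the lower half-plane (the paper's phrasing ``downward'' vs.\ ``upward'' running is the accurate picture).

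Your suggested bispectrality/uniqueness alternative is a genuinely different route that the paper does not take. It is plausible and in spirit close to how Whittaker polynomials are often characterized, but as written it is a sketch: you would need to pin down the precise finite-difference relations induced on the lattice $(\bs n,\bs{\tilde n})$ by the mutated Hamiltonians, verify they hold at boundary cases $n_1=n_2$ and $\tilde n_1 = \tilde n_2$, and show the polynomial solution space is one-dimensional after imposing the degree bound. These are not trivial, so the residue calculation remains the more self-contained argument, and the one the paper actually carries out.
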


\begin{proof}
Let us first consider the value of $\widetilde\Psi_{\la}(\bs x)$ at the point
$$
\bs x = (c^+_{n_1,\tilde n_1},c^-_{n_2,\tilde n_2}-\epsilon)
$$
for some $\epsilon\in\R$. Recalling the poles and zeros of the non-compact dilogarithm~\eqref{poles-zeros}, we see that the downward and upward running sequence of poles of the integrand consist respectively of the points
$$
t^-_{l,\tilde l} = -ilb-i\tilde l b^{-1} \qquad\text{and}\qquad t^+_{m,\tilde m} = \epsilon +ib(m-n)+ib^{-1}(\tilde m-\tilde n)
$$
where
$$
n = n_2-n_1, \qquad \tilde n = \tilde n_2 - \tilde n_1, \qquad\text{and}\qquad l, \tilde l, m, \tilde m \in\Z_{\ge0}.
$$
Recall now that the integration contour $C$ in~\eqref{GG} is defined so as to cut the plane into two connected components, one containing the sequence $\{t^-_{l,\tilde l}\}$ and the other the sequence  $\{t^+_{m,\tilde m}\}$.

As $\epsilon\rightarrow 0$, two phenomena simultaneously occur: the prefactor $\varphi(x_2-x_1)$ tends to zero, while the $(n+1)(\tilde n+1)$ poles of the integrand  with
$$
l+m=n
\qquad\text{and}\qquad
\tilde l+\tilde m=\tilde n
$$
collide. Let us write $I$ for the integral obtained by pushing the contour of integration $C$ across the colliding poles $t^+_{m,\tilde m}$, so that setting $\widetilde\Psi= \int_C\tilde\psi$, we have
$$
\widetilde \Psi = I + 2\pi i\sum_{m=0}^n\sum_{\tilde m=0}^{\tilde n}\mathrm{Res}_{t=t^+_{m,\tilde m}}\tilde\psi.
$$
We also have $\lim_{\epsilon \to 0}I=0$ owing to the vanishing of $\varphi(x_2-x_1)$. Finally, using~\eqref{eqn:res} and~\eqref{eqn:func} to calculate the sum of $(n+1)(\tilde n+1)$ residues, we arrive at the equality~\eqref{eq:whittaker-analytic-continuation}. 
\end{proof}

\subsection{Macdonald polynomials from Halln\"as--Ruijsenaars functions} The derivation of Macdonald polynomials goes along the similar lines as that of Whittaker polynomials.

\begin{defn}
Given a partition $\bs n$ as in~\eqref{not:n}, the symmetric $\gl_2$ Macdonald polynomial is defined by
$$
P_{\bs n}(\bs x; t,q) = \sum_{r=0}^{n_2-n_1} \frac{\hr{q^{n_2-n_1};q^{-1}}_r \hr{t;q}_r}{\hr{q^{n_2-n_1-1}t;q^{-1}}_r \hr{q;q}_r} x_1^{n_1+r} x_2^{n_2-r},
$$
where $(X;q)_n$ denotes the $q$-Pochhammer symbol, see~\eqref{q-Poch}.
\end{defn}

Consider the renormalized Halln\"as--Ruijsenaars function
\beq
\label{eq:HR-renorm}
\widetilde \Phi^\tau_{\bs\mu}(\bs\la) = \zeta e^{-4\pi i \mu\tau_-} \frac{\varphi(-2\mu-c_b)}{\varphi(-2\mu-2\tau)} \Rho^\tau_{\bs\mu}(\bs\la).
\eeq

\begin{theorem}
\label{thm:analytic-continuation}
 Given a pair of partitions $\bs n$, $\bs{\tilde n}$ as in~\eqref{not:n}, the value of the function $\widetilde \Phi^\tau_{\bs\mu}(\bs\la)$ at the point $\bs\mu = (-\tau-c^+_{n_1,\tilde n_1},\tau-c^-_{n_2,\tilde n_2})$ is given by the product
\beq
\label{eq:prod-Macdo}
P_{\bs n}(e^{2\pi b\la_1}, e^{2\pi b\la_2}; e^{2\pi b(2\tau+c_b)}, e^{2\pi i b^2}) P_{\bs{\tilde n}}(e^{2\pi b^{-1}\la_1}, e^{2\pi b^{-1}\la_2}; e^{2\pi b^{-1}(2\tau+c_b)}, e^{2\pi i b^{-2}}),
\eeq
where $c^{\pm}_{r,s}$ is defined by~\eqref{not:crs}.
\end{theorem}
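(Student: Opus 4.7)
The plan is to proceed in parallel with the proof of Theorem~\ref{thm:W-pol}, using analytic continuation and residue calculus on the integral representation~\eqref{eq:HR} of the Halln\"as--Ruijsenaars function. The starting observation is that the renormalization factor $\varphi(-2\mu - c_b)/\varphi(-2\mu - 2\tau)$ in~\eqref{eq:HR-renorm} diverges at the special value of $\bs\mu$: writing $n = n_2 - n_1$ and $\tilde n = \tilde n_2 - \tilde n_1$, one computes $-2\mu - 2\tau = c_b + inb + i\tilde n b^{-1}$, which is a zero of the non-compact quantum dilogarithm $\varphi$. For $\widetilde\Phi^\tau_{\bs\mu}(\bs\la)$ to be finite and produce the Macdonald polynomial product, the integral in~\eqref{eq:HR} must therefore acquire a compensating zero of the same order at the special $\bs\mu$, and extracting the resulting $0 \cdot \infty$ form will yield the desired expression.

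I would implement this by approaching the special $\bs\mu$ along a one-parameter perturbation and tracking the poles of the integrand in~\eqref{eq:HR} relative to the contour $C$ specified by Notation~\ref{contour-convention}. As the perturbation is turned off, a finite collection of $(n+1)(\tilde n+1)$ poles from the four $\varphi$-sequences in the integrand converge onto the contour or pinch it; label these by pairs $(r,\tilde r)$ with $0 \le r \le n$ and $0 \le \tilde r \le \tilde n$. Deforming the contour past these poles before taking the limit, the integral decomposes as a sum of residues plus a bulk remainder that vanishes in the limit, while the diverging renormalization factor combines with the sum of residues to produce a finite result. At the pole indexed by $(r,\tilde r)$, the exponential prefactor $e^{-4\pi i x(\la + \tau_-)}$ evaluates to a monomial $X_1^{n_1+r} X_2^{n_2-r} \widetilde X_1^{\tilde n_1+\tilde r} \widetilde X_2^{\tilde n_2 - \tilde r}$ with $X_j = e^{2\pi b\la_j}$ and $\widetilde X_j = e^{2\pi b^{-1}\la_j}$. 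The coefficient of this monomial, assembled from the residues of the four $\varphi$-factors via~\eqref{eqn:res} and~\eqref{eqn:func}, decouples by the modular duality of $\varphi$ under $b\leftrightarrow b^{-1}$ into a $(q,t)$-rational factor times a $(\tilde q, \tilde t)$-rational one.

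The main obstacle will be the combinatorial identification of these residue coefficients with the specific four-factor $q$-Pochhammer ratio $(q^{n};q^{-1})_r(t;q)_r / ((q^{n-1}t;q^{-1})_r(q;q)_r)$ appearing in the definition of $P_{\bs n}(\bs X;t,q)$. Unlike the Whittaker case, in which telescoping the functional equation directly produces $q$-binomial coefficients, here the presence of two distinct shift scales --- the $c_b$ governing the zeros of $\varphi$ and the $2\tau$ entering through $\tau_-$ --- produces interlocked products of $(q;q)$- and $(t;q)$-type factors, and showing that the four $\varphi$-factors of the integrand reassemble into the precise Macdonald ratio requires careful tracking of which factors contribute to the numerator versus the denominator of each term.
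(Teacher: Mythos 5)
Your overall strategy is the one the paper follows: perturb $\bs\mu$ off the special point, track the poles of the integrand in~\eqref{eq:HR} relative to the contour of Notation~\ref{contour-convention}, deform the contour across the finitely many colliding singularities, and match the resulting residue sum against the Macdonald coefficients, in exact parallel with Theorem~\ref{thm:W-pol}. However, your opening analytic claim is backwards, and the error propagates into an inconsistency. You correctly compute $-2\mu-2\tau=c_b+inb+i\tilde n b^{-1}$ at the special $\bs\mu$, but by~\eqref{poles-zeros} this point is a \emph{pole} of $\varphi$, not a zero (the zeros sit at $-\hr{c_b+imb+inb^{-1}}$). Hence $\varphi(-2\mu-2\tau)$ blows up and the renormalization factor $\varphi(-2\mu-c_b)/\varphi(-2\mu-2\tau)$ in~\eqref{eq:HR-renorm} \emph{vanishes} rather than diverges --- this is also how it is used at the end of Section~\ref{sec:poly}, where a term is discarded precisely ``since the dilogarithm $\varphi(-2\mu-2\tau)$ has a pole.'' Consequently the integral in~\eqref{eq:HR} does not acquire ``a compensating zero'': it \emph{diverges}, because an upward-running pole sequence of a numerator dilogarithm pinches the contour against a downward-running zero sequence of a denominator dilogarithm (only two of the four $\varphi$-sequences collide for generic real $\tau$; the other two pairs would require $\tau$ to take complex values). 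Each residue at a pinching pole grows like $\eps^{-1}$, and the finite limit comes from multiplying this divergence by the $O(\eps)$ prefactor. Your later sentence --- a \emph{diverging} renormalization factor combining with the residue sum to give a finite result --- cannot hold as stated: with a finite residue sum the product would diverge, and with your claimed vanishing integral it would be a genuine $0\cdot\infty$ requiring a different (order-of-vanishing) analysis than the residue calculus you then perform.

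Once the roles of zero and pole are corrected, the computation you outline goes through as in the paper: the integral over the deformed contour remains finite, so after multiplication by the vanishing prefactor it contributes nothing, and the value of $\widetilde\Phi^\tau_{\bs\mu}$ is the prefactor times $-2\pi i$ times the sum of the $(n+1)(\tilde n+1)$ residues at the colliding poles, which~\eqref{eqn:res} and~\eqref{eqn:func} convert into the four-factor Pochhammer ratios of $P_{\bs n}$, factorized over $b\leftrightarrow b^{-1}$ as you describe. One further small point the paper records and you omit: the limiting argument is carried out for $\tau_-\neq0$, with the case $\tau_-=0$ recovered afterwards from the analyticity of both sides in $\tau$.
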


\begin{proof}
We show the conclusion of the theorem holds for $\tau_-\neq0$, with the $\tau_-=0$ case then following from the analyticity of both sides in $\tau$. As in the proof of Theorem~\ref{thm:W-pol}, we consider the value of $\widetilde \Phi^\tau_{\bs\mu}(\bs\la)$ at
$$
\bs\mu = (-\tau-c^+_{n_1,\tilde n_1}-\eps,\tau-c^-_{n_2,\tilde n_2}+\eps).
$$
As $\eps \to 0$, we see that the zeros
$$
x^+_{m,\tilde m} = ib\hr{\frac{n_2-n_1}{2}-m} + ib\hr{\frac{\tilde n_2 - \tilde n_1}{2}-\tilde m} + \eps, \qquad m, \tilde m \in \Z_{\ge0}
$$
of the dilogarithm $\phi(x-\mu-\tau_-)$ collide with the poles
$$
x^-_{l,\tilde l} = ib\hr{l-\frac{n_2-n_1}{2}} + ib\hr{\tilde l - \frac{\tilde n_2 - \tilde n_1}{2}} - \eps, \qquad l, \tilde l \in \Z_{\ge0}
$$
of the dilogarithm $\phi(x+\mu+\tau_-)$. Following the reasoning in the proof of Theorem~\ref{thm:W-pol}, we obtain
$$
\widetilde \Phi = -2\pi i\sum_{m=0}^{n_2-n_1} \sum_{\tilde m=0}^{\tilde n_2 - \tilde n_1}\mathrm{Res}_{x=x^+_{m,\tilde m}}\tilde\phi,
$$
where $\widetilde \Phi = \int_C \tilde \phi$. Finally, the evaluation of residues yields the desired result.
\end{proof}

\subsection{Harish-Chandra series from Whittaker functions}

By the convention~\eqref{contour-convention}, the contour $C$ in the integral~\eqref{eq:psitilde} passes above the zeros of $\varphi(t-c_b)$, below those of $\varphi(x_2-x_1-t)$, and can be chosen to escape to infinity along the real line. The zeros of the factor $\varphi(t-c_b)$ read $t_{r,s}=-i(br+b^{-1}s)$, $r,s\in \Z_{\ge0}$. Using equations (A.4) and (A.6), we find that
$$
\Res_{t=t_{r,s}} \hr{e^{2\pi it(\lambda_1-\lambda_2)}\frac{\varphi(x_2-x_1)}{\varphi(t-c_b)\varphi(x_2-x_1-t)}} = -\zeta \Lambda_{1,2}^r \widetilde \Lambda_{1,2}^s \frac{(-qX_{2,1};q^2)_r}{(q^{-2r};q^{2})_r} \frac{(-q\widetilde X_{2,1};\tilde q^2)_s}{(\tilde q^{-2r};\tilde q^{2})_s}
$$
where $X_j=e^{2\pi bx_j}$, $\widetilde X_j=e^{2\pi b^{-1} x_j}$, and we write $A_{j,k}$ for the ratio $A_j/A_k$. Making a change of variables
$$
x_1'=x_1-\frac{c_b}{2}, \qquad x_2'=x_2+\frac{c_b}{2}
$$
and setting $X_j'=e^{2\pi b x_j'}$, we arrive at the (asymptotically free) Harish-Chandra series for the $\gl_2$ Whittaker $q$-difference equation, see e.g.~\cite{Che09, DFKT17},\footnote{We would like to note that in \emph{loc.cit.} variables $\La$ and $X$ are interchanged.}
\beq
\label{eq:whit-series}
\sum_{r \in \Z_{\ge0}}\Res_{t=t_{r,0}}(\tilde \psi) = \La_1^{ib^{-1}x'_1} \La_2^{ib^{-1}x'_2} \sum_{r\ge 0} \Lambda_{1,2}^r  \frac{(X'_{2,1};q^2)_r}{(q^{-2r};q^2)_r}.
\eeq
One can also check that a similar calculation with the zeros of the factor $\varphi(x_2-x_1-t)$ yields an expression~\eqref{eq:whit-series} with $\La_1$ and $\La_2$ interchanged.
Finally, we note that conditions $x_1=c^+_{n_1,\tilde n_1}$, $x_2=c^-_{n_2,\tilde n_2}$ imply $X_{2,1}'=q^{2(n_1-n_2)}$,
and the Harish-Chandra series~\eqref{eq:whit-series} truncates to the first factor in~\eqref{eq:whittaker-analytic-continuation}.

\subsection{Harish-Chandra series from Halln\"as--Ruijsenaars functions}
Similarly to the Whittaker case, we shall calculate the sum of the residues of the integrand in the expression~\eqref{eq:HR} at points
$$
x_{r,0}^{\pm} = \pm\mu-\tau + c_{r,0}^-,
$$
where $\mu = (\mu_1-\mu_2)/2$ and $c_{r,s}^-$ is defined by~\eqref{not:crs}. Let us denote
$$
\mu'_1 = \mu_1-\tau_+, \qquad \mu'_2 = \mu_2+\tau_+,
\qquad\text{and}\qquad
\mu''_j = -ib^{-1}\mu'_j
$$
for $j=1,2$. Then the contribution $\Rho_+$ from the poles at $x=x_{r,0}^+$ reads:
$$
\Rho_+ = \zeta\zeta_{inv} e^{4\pi i \tau^2} e^{2\pi i \tau_-(\mu'_2-\mu'_1)} \frac{\varphi(\mu_1-\mu_2-2\tau)}{\varphi(\mu_1-\mu_2-c_b)} {\bs\La}^{\bs\mu''} P_{q,t}(\La_{2,1}\vert \Mu_{1,2}), 
$$
where
$$
\bs\La^{\bs \mu''} = \La_1^{\mu''_1} \La_2^{\bs\mu''_2}
\qquad\text{and}\qquad
P_{q,t}(\La\vert\Mu)  = \sum_{r \ge 0} \La^r \frac{(t^{-2}\Mu;q^{-2})_r (t^2;q^2)_r}{(q^{-2}\Mu;q^{-2})_r (q^2;q^2)_r}.
$$
We note that the product ${\bs\La}^{\bs{\mu''}} P_{q,t}(\La_{2,1}\vert \Mu_{1,2})$ is the Harish-Chandra series solution to the Macdonald eigenvalue equation, see~\cite{Che09,Sto14,NS12}. It is immediate to see that the contribution $\Rho_-$ from the poles at $x=x_{r,0}^-$ can be obtained from $\Rho_+$ by swapping $\mu_1$ and $\mu_2$.

Finally, evaluating the expression
\beq
\label{eq:HR-res}
\zeta e^{-4\pi i \mu\tau_-} \frac{\varphi(-2\mu-c_b)}{\varphi(-2\mu-2\tau)} \hr{\Rho_+ + \Rho_-},
\eeq
at point $\bs\mu = (-\tau-c^+_{n_1,\tilde n_1},\tau-c^-_{n_2,\tilde n_2})$, we see that the first summand vanishes, since the dilogarithm $\varphi(-2\mu-2\tau)$ has a pole, while the second summand truncates to the first factor in~\eqref{eq:prod-Macdo}.

\appendix

\section{Quantum dilogarithms}
\label{sec:qdl}
In this appendix we recall the definition and some properties of the quantum dilogarithm, mostly following~\cite{Kas01} and~\cite{FG09}. 

\subsection{Non-compact quantum dilogarithm}

\begin{defn}
The \emph{non-compact quantum dilogarithm function} $\varphi_b(z)$ is defined in the strip $\hm{\Im(z)} < \hm{\Im(c_b)}$ by the following formula:
$$
\varphi_b(z) = \exp\hr{\frac{1}{4} \int_C\frac{e^{-2izt}}{\sh(tb)\sh(tb^{-1})}\frac{dt}{t}},
$$
where the contour $C$ follows the real line from $-\infty$ to $+\infty$, surpassing the origin in a small semi-circle from above. 
\end{defn}

The non-compact quantum dilogarithm can be analytically continued to the complex plane as a meromorphic function with an essential singularity at infinity. The resulting function, which we denote by the same symbol $\varphi_b(z)$, enjoys the symmetry
$$
\varphi_b(z) = \varphi_{-b}(z) = \varphi_{b^{-1}}(z).
$$
In what follows, we shall write $\varphi(z)$ for $\varphi_b(z)$, omitting the symbol $b$ from the notation. If $b+b^{-1}\in\mathbb{R}$, then the function $\varphi$ satisfies the unitarity relation 
\begin{align}
\label{eq:qdl-unitary}
\overline{\varphi(z)} \varphi(\overline{z}) = 1.
\end{align}
We assume $b+b^{-1} \in \R_{>0}$, and recall the pure imaginary constant
$$
c_b = \frac{i(b + b^{-1})}{2}
$$
along with phase constants
\beq
\label{eq:zetas}
\zeta = e^{\pi i(1-4c_b^2)/12} \qquad\text{and}\qquad \zeta_{\inv} = \zeta^{-2} e^{-\pi i c_b^2}.
\eeq
Then the poles and zeros of $\varphi(z)$ are given by
\beq
\label{poles-zeros}
\varphi(z)^{\pm1} = 0 \quad\Leftrightarrow\quad z = \mp\hr{c_b + ibm + ib^{-1}n} \quad\text{for}\quad m,n \in \Z_{\ge 0}.
\eeq
The asymptotic behavior around its zeros, poles, and infinity is described by
\beq
\label{eqn:res}
\varphi(z\pm c_b) \sim \pm \zeta^{-1} (2 \pi i z)^{\mp1} \qquad\text{as}\qquad z \to 0,
\eeq
and
$$
\varphi(z) \sim
\begin{cases}
\zeta_{\inv} e^{\pi i z^2}, & \Re(z) \to +\infty, \\
1, & \Re(z) \to -\infty;
\end{cases}
$$
respectively. Finally, $\varphi$ satisfies the reflection/inversion relation,
\beq
\label{inv}
\varphi(z) \varphi(-z) = \zeta_{\inv} e^{\pi i z^2},
\eeq
as well as the functional equations
\beq
\label{eqn:func}
\varphi\hr{z - ib^{\pm1}/2} = \hr{1 + e^{2\pi b^{\pm1}z}} \varphi\hr{z + ib^{\pm1}/2},
\eeq
and the pentagon identity
\beq
\varphi(p) \varphi(x) = \varphi(x) \varphi(p+x) \varphi(p),
\label{pentagon}
\eeq
which holds for any pair of self-adjoint operators $p$ and $x$ satisfying $[p,x] = \frac{1}{2\pi i}$.

The non-compact quantum dilogarithm is closely related to the classical Barnes double sine function $S_2(z) = S_2(z \,|\,\omega_1,\omega_2)$ defined by the formula
\beq
\label{eq:barnes}
S_2(z) = \exp\hr{\int_0^\infty\hr{\frac{\sh((2z-\omega_1-\omega_2)t)}{\sh(\omega_1t)\sh(\omega_2t)} - \frac{2z-\omega_1-\omega_2}{\omega_1\omega_2t}} \frac{dt}{2t}}.
\eeq
Indeed, we have
$$
S_2\hr{z \,|\, b,b^{-1}} = \varphi(iz-c_b) e^{-\frac{\pi i}{2}(iz-c_b)^2}.
$$

\subsection{Integral identities.}
The quantum dilogarithm function $\varphi(z)$ satisfies many integral identities, all of which make use of the convention~\eqref{contour-convention}. Namely, in any contour integral of the form 
$$
\int_{C} \prod_{j,k}\frac{\varphi(t-a_j)}{\varphi(t-b_k)}f(t)dt,
$$
where $f(t)$ is some entire function, we assume that the contour $C$ is passing below the poles of $\varphi(t-a_j)$ for all $j$, above the poles of $\varphi(t-b_k)^{-1}$ for all $k$, and escaping to infinity in such a way that the integrand is rapidly decaying.

The Fourier transform of the quantum dilogarithm can be calculated explicitly:
\begin{align}
\label{Fourier-1}
\zeta \varphi(w) &= \int\frac{e^{2\pi i x(w-c_b)}}{\varphi(x-c_b)} dx, \\
\label{Fourier-2}
\frac{1}{\zeta \varphi(w)} &= \int \frac{\varphi(x+c_b)}{e^{2\pi i x(w+c_b)}} dx.
\end{align}
Note that in accordance with Notation~\ref{contour-convention}, the integration contours in~\eqref{Fourier-1} and~\eqref{Fourier-2} can be taken to be $\R+i\eps$ and $\R-i\eps$ respectively with a small positive number $\eps$.

Using the above Fourier transform to compute the integral kernels for the operators $\varphi(p)$ and $\varphi(p+x)$, one sees that the pentagon identity~\eqref{pentagon} is equivalent to either of the following integral analogs of Ramanujan's $_1\psi_1$ summation formula:
%
\begin{align}
\label{beta-1}
\int \frac{\varphi(x+u)}{\varphi(x+v)} e^{2\pi i xw} dx &= \zeta e^{-2\pi i w(v+c_b)} \frac{\varphi(u-v-c_b) \varphi(w+c_b)}{\varphi(u-v+w-c_b)}, \\
\label{beta-2}
&= \zeta^{-1} e^{-2\pi i w(u-c_b)} \frac{\varphi(v-u-w+c_b)}{\varphi(v-u+c_b) \varphi(-w-c_b)}.
\end{align}
Finally, we will use the following limit of the integral analogue of the Saalsch\"utz summation formula:
\begin{align}
\label{Saal}
\int e^{-4\pi i c_b x} \frac{\varphi(x+u_1) \varphi(x+u_2)}{\varphi(x+v-c_b)\varphi(x-c_b)} dx = \zeta^3 e^{\pi i v(2c_b-v)}\frac{\varphi(u_1)\varphi(u_2)\varphi(u_1-v)\varphi(u_2-v)}{\varphi(u_1+u_2-v-c_b)}.
\end{align}

\subsection{Compact quantum dilogarithm}

Recall the standard notation for the $q$-Pochhammer symbol:
\beq
\label{q-Poch}
(X;q)_n = \prod_{k=0}^{n-1}(1-q^kX).
\eeq
so that the $q$-binomial coefficient can be expressed as
\beq
\label{q-binom}
\binom{n}{k}_q = \frac{(q;q)_n}{(q;q)_k(q;q)_{n-k}}.
\eeq
The \emph{compact quantum dilogarithm function} $\Psi_q(X)$ is defined by the series expansion of
\beq
\label{eq:compact-qdl}
\Psi_q(X) = \prod_{n\ge0}\left(1+Xq^{2n+1}\right)^{-1} \in \mathbb{Q}(q)[[X]],
\eeq
or equivalently
$$
\Psi_q(X) = (-qX;q^2)_{\infty}^{-1}.
$$
For $|q|<1$ the product~\eqref{eq:compact-qdl} is convergent, and for $\Im(b^2)>0$ the compact and non-compact versions of the quantum dilogarithm function are related by
$$
\varphi_b(z) = \frac{\Psi_{\tilde q^{-1}}\big(e^{2\pi b^{-1}z}\big)}{\Psi_{q}\big(e^{2\pi bz}\big)},
\qquad\text{where}\qquad
q = e^{\pi i b^2}, \quad \tilde q = e^{\pi i b^{-2}}.
$$
The analog of the functional equation~\eqref{eqn:func} is given by
\beq
\label{eqn:q-Gamma}
\Psi_q(qX) = (1+X)\Psi_q(q^{-1}X).
\eeq
We will use the following discrete (or ``compact'') analogs of the integral identities from the previous section.
A discrete analog of the Fourier transform formula~\eqref{Fourier-1} is given by the Taylor series expansion of the Pochhammer symbol:
$$
(z;q)^{-1}_{\infty} = \sum_{n\ge0}\frac{z^n}{(q;q)_n},
$$
or more suggestively
$$
\frac{(q;q)_\infty}{(z;q)_{\infty}} = \sum_{n\ge0}{(q^{n+1};q)_\infty}z^n,
$$
with the constant $(q;q)_\infty$ playing the role of phase $\zeta$ in~\eqref{Fourier-1}.
A discrete analog of the pentagon integral~\eqref{beta-1} is given by the $q$-binomial theorem
$$
\sum_{n\ge0}\frac{(a;q)_n}{(q;q)_n}z^n = \frac{(az;q)_{\infty}}{(z;q)_{\infty}}, \quad |z|<1, \quad |q|<1,
$$
or equivalently
$$
\sum_{n\ge0} \frac{(q^{n+1};q)_\infty}{(aq^{n};q)_\infty} = (q;q)_\infty \cdot \frac{(az;q)_{\infty}}{(a;q)_\infty (z;q)_{\infty}}.
$$
As for the integral~\eqref{Saal}, in terms of the $q$-hypergeometric series
$$
_2\psi_1(a,b;c;z \,|\, q) = \sum_{n\ge0}\frac{(a;q)_n(b;q)_n}{(c;q)_n(q;q)_n}z^n
$$
we have Heine's $q$-analogue of Gauss' summation formula:
\beq
\label{Heine}
_2\psi_1\left(a,b;c;\frac{c}{ab} \,\Big|\, q\right) = \frac{(\frac{c}{a};q)_\infty (\frac{c}{b};q)_\infty}{(c;q)_\infty (\frac{c}{ab};q)_\infty}.
\eeq
Heine's formula can also be written in the following form more reminiscent of~\eqref{Saal}:
$$
\sum_{n\ge 0} \frac{(c q^n,q)_\infty (q^{n+1},q)_\infty}{(aq^n,q)_\infty (bq^n,q)_\infty} z^n = (q,q)_\infty  \frac{(c/a,q)_\infty(c/b,q)_\infty}{(a,q)_\infty (b,q)_\infty (c/(a b),q)_\infty},
$$
In the special case $b = q^{-n}$ with $n\in\mathbb{Z}_{\ge0}$ the right hand side of~\eqref{Heine} terminates and reduces to the Chu--Vandermonde formula
\beq
\label{eq:gauss-thm}
_2\psi_1\left(a,q^{-n};c;q^n\frac{c}{a} \,\Big|\, q\right) = \frac{(c/a;q)_n}{(c;q)_n}.
\eeq
%
%
%

\bibliographystyle{alpha}

\end{document}